\documentclass[10pt, journal, onecolumn]{IEEEtran}

\usepackage{subfig}

\usepackage{epsfig,amsmath,amssymb,epsf,cite,algorithm,algorithmic}
\usepackage{graphicx}
\usepackage{epstopdf}
\usepackage[margin=1in]{geometry}

\newtheorem{theorem}{Theorem}
\newtheorem{lemma}{Lemma}

\newtheorem{proposition}{Proposition}
\newtheorem{definition}{Definition}

\usepackage{stmaryrd}
\usepackage{amssymb}
\usepackage{tipa}
\usepackage{setspace}
\usepackage[none]{hyphenat}
\usepackage{color}

\graphicspath{{./figs/}}

\def\bLambda{{\pmb{\Lambda}}}

\def\b0{{\pmb{0}}}

   \def\bx{{\boldsymbol{x}}}
  \def\bs{{\boldsymbol{s}}} \def\by{{\boldsymbol{y}}}
   
\def\bg{{\boldsymbol{g}}} 

\def\bA{{\boldsymbol{A}}}   \def\bU{{\boldsymbol{U}}}
 \def\bI{{\boldsymbol{I}}}  \def\bV{{\boldsymbol{V}}}
  \def\bQ{{\boldsymbol{Q}}} \def\bW{{\boldsymbol{W}}}
   
  \def\bS{{\boldsymbol{S}}} \def\bY{{\boldsymbol{Y}}}
   \def\bZ{{\boldsymbol{Z}}}
\def\bG{{\boldsymbol{G}}}

\def\bgg{{\underline{\boldsymbol{g}}}}

\newenvironment{proof}[1][Proof:]{\begin{trivlist}
\item[\hskip \labelsep {\bfseries #1}]}{\end{trivlist}}

\newcommand{\qed}{\hfill \mbox{\raggedright \rule{.07in}{.1in}}}
\newcommand{\DG}[1]{{#1}}
\definecolor{ForestGreen}{RGB}{34,139,34}
\newcommand{\LL}[1]{{#1}}
\definecolor{orange}{RGB}{255,165,0}
\newcommand{\Lina}[1]{{#1}}
\newcommand{\XC}[1]{{#1}}

\begin{document}
%
\title{\DG{Asynchronous Massive Access and Neighbor Discovery Using
OFDMA}}

\author{
\IEEEauthorblockN{Xu~Chen\IEEEauthorrefmark{1},
\DG{Lina~Liu\IEEEauthorrefmark{2},}
Dongning~Guo\IEEEauthorrefmark{2}, Gregory W. Wornell\IEEEauthorrefmark{3}}
\\\IEEEauthorblockA{\IEEEauthorrefmark{1} Waymo, Mountain View, CA}
    \\\IEEEauthorblockA{\IEEEauthorrefmark{2}Department of Electrical and Computer Engineering, Northwestern University, Evanston, IL}
    \\\IEEEauthorblockA{\IEEEauthorrefmark{3}Department of Electrical Engineering and Computer Science, Massachusetts Institute of Technology, Cambridge, MA}
}

\date{}

\maketitle

\begin{abstract}
\DG{\LL{The fundamental communication problem in the wireless Internet of Things (IoT) is to discover a massive number of devices and to allow them reliable access to shared channels. Oftentimes these devices} transmit short messages randomly and sporadically.  This paper proposes a novel signaling scheme for grant-free massive access, where each device encodes its \LL{identity} and/or \LL{information} in a sparse set of tones.  Such transmissions are implemented in the form of orthogonal frequency-division multiple access (OFDMA).
\LL{Under some mild conditions and} assuming device delays to be bounded \LL{unknown} multiples of symbol intervals, sparse OFDMA is proved to enable arbitrarily reliable asynchronous device identification and message decoding with a codelength that is \LL{$O(K(\log K+\log S + \log N))$, where $N$ denotes the device population, $K$ denotes 
the actual number of active devices, and $\log S$ is essentially equal to the number of bits a device can send (including its identity).}
}
By exploiting \DG{the} \LL{Fast Fourier Transform (FFT)}, the computational complexity \LL{for discovery and decoding} 
\DG{can be made to be} sub-linear in the total device population.
To prove the concept, a specific design is proposed to identify up to 100 active devices out of $2^{38}$ possible devices with up to 20 symbols of delay and moderate signal-to-noise ratios and fading.  The codelength compares much more favorably with those of standard slotted ALOHA and carrier-sensing multiple access (CSMA) schemes.
%
\end{abstract}

\section{Introduction}

By some estimate~\cite{stoyanova2020survey}, there will be up to 500 billion connected devices world-wide in the Internet of Things (IoT) by year 2030. There can be well over a million \DG{low-cost, battery-powered} IoT devices within 500 meter range in a densely populated area. 
\DG{The general term of {\em massive access} describes the setting where a large number of devices need to access a shared medium in the uplink to send messages to some access points.  Usually, wireless IoT devices transmit short messages randomly and sporadically.  If the access points only need to decode the messages but are not interested in the corresponding devices' identities, the setting is referred to as {\em unsourced random access}~\cite{polyanskiy2017perspective}.  In the other extreme where the sole purpose is to detect and identify active devices within range, the setting is often referred to as {\em neighbor discovery} or {\em device identification}.  Whereas neighbor discovery is a special case of massive access, the latter can also be regarded as the former if the transmitted data are regarded as all or a segment of the device identities.
}


\DG{Most IoT access solutions in practice either orthogonalize transmissions or suffer from collisions over the air.
In particular, using} 
a naive time division multiple access (TDMA) scheme to schedule \DG{densely deployed} 
devices would incur \DG{large latencies}.  
\DG{So would a} random access mechanism \DG{based on} 
classical ALOHA. 
\DG{Since} 
the number of devices far exceeds the frame length, it is also impossible to assign nearly orthogonal sequences to all the devices to support code division multiple access (CDMA), especially due to device asynchrony.
Moreover, massive access using conventional multiuser detection approaches
generally involves a polynomial complexity in the number of devices~\cite{angelosante2010neighbor}, which is also unaffordable.  

\DG{In practice, it is hard to eliminate relative delays between devices, in part because of their different distances to the same access point.}
For example, a \DG{difference} 
of 300 meters implies \DG{a} free space propagation delay of one microsecond, which spans 20 samples at 20 million samples per second.
\DG{On the other hand, the maximum \LL{relative} delay can be within a small fraction of a frame if a common timing reference such as a beacon signal is available.}

\DG{A successful massive access solution for the IoT should be {\em ultra-scalable} to support a massive number of devices, incur low latencies, and allow for some asynchrony between devices.}

\subsection{Related Work}
\label{sec::related_work}

\DG{To detect and/or decode messages from many transmitters sharing a common medium is a problem in the well established area of multiuser detection.  
A small-scale neighbor discovery 
problem is studied} as a multiuser detection problem \DG{in}~\cite{angelosante2010neighbor}.
\DG{In a more challenging case where the device population is several orders of magnitude larger than} 
the number of active devices, 
schemes inspired by \DG{the} compressed sensing \LL{(CS)} \DG{literature are} 
proposed \DG{in}~\cite{zhu2011exploiting,schepker2012compressive,zhang2013neighbor,zhang2014virtual,liu2018massive1,thompson2018compressed,chen2018sparse,liu2018sparse,amalladinne2018asynchronous}. These schemes can reduce the codelength 
\DG{substantially when} compared with 
802.11 type protocols, but most of those schemes require synchronous transmissions. 
\DG{A \LL{convex optimization} based algorithm is} 
proposed to detect active devices 
\DG{using} asynchronous CDMA random access~\cite{applebaum2012asynchronous}, but
\LL{the} \DG{polynomial complexity \LL{in ~\cite{zhu2011exploiting,schepker2012compressive,zhang2013neighbor,zhang2014virtual,liu2018massive1,thompson2018compressed,chen2018sparse,liu2018sparse,amalladinne2018asynchronous,applebaum2012asynchronous}} scales poorly for massive access}.

Non-orthogonal multiple access (NOMA) allows multiple devices to share the time and frequency resources via power domain or code domain multiplexing~\cite{saito2013non, dai2015non}, \LL{where successive interference cancellation (SIC)} is used to cancel multiuser interference at the receiver. NOMA's decoding algorithms (e.g., message passing~\cite{taherzadeh2014scma}) have in general polynomial complexities in the \DG{device population}. However, whether NOMA is resilient to imperfect channel state information is not well understood. 

\DG{The} information-theoretic limits of \DG{uplink and downlink} massive access \DG{are}
studied in~\cite{chen2017capacity} and~\cite{chen2014manybroadcast}, \DG{respectively}, where the number of devices 
scales with the codelength \DG{in general}. 
\DG{It is shown therein} that separate device identification and message decoding is capacity-achieving.  The degree of freedom of massive access fading channels \DG{is} 
analyzed in~\cite{yu2017fundamental}. The capacity of asynchronous massive access \DG{is} 
analyzed in~\cite{shahi2018strongly}. \LL{SIC does not achieve the boundary of the capacity region in the case of asynchronous NOMA~\cite{ganji2021asynchronous}.} An alternative information-theoretical model \LL{is} proposed in~\cite{polyanskiy2017perspective}, where the access point aims to decode the set of messages regardless of the corresponding sources.
\LL{Low-complexity schemes have been proposed \LL{for unsourced multiple access}~\cite{ordentlich2017low,amalladinne2018coupled,fengler2019sparcs,calderbank2018chirrup}. Reference \cite{ordentlich2017low} proposes $T$-fold ALOHA. Reference \cite{amalladinne2018coupled} uses CS for sub-block recovery and a tree-based algorithm for sub-block stitching. A sparse regression code is used in \cite{fengler2019sparcs} as an inner code with a corresponding approximate message passing based algorithm as a inner decoder. Binary chirp coding and chirp reconstruction algorithm is studied in \cite{calderbank2018chirrup}. These schemes require fully synchronous transmissions.}


The idea of codes on graph has \DG{also} been applied in random access~\cite{paolini2015coded,taghavi2016design}.
One \DG{notable} scheme is 
{\em coded slotted ALOHA, in which} 
packets are repeatedly transmitted in different slots and are decoded using successive cancellation \DG{under the assumptions of} 
synchronous transmissions and perfect 
cancellation. The asynchronous model \DG{is} 
studied in~\cite{de2014asynchronous,clazzer2016detection,sandgren2016frame}
\DG{under the idealized assumption of perfect cancellation}.
Rateless codes have been proposed for multiple access in machine-to-machine communications~\cite{shirvanimoghaddam2015probabilistic}, where the channel gains are assumed to be known. As the number of users increases, the imperfect channel estimation is detrimental to the performance of successive cancellation.

\subsection{Contributions}

\LL{The goal here is to develop a suitable signaling scheme for grant-free transmissions by a large number of devices with moderate delay uncertanties.} \DG{The \LL{specific} contributions of this paper are summarized as follows:}
\begin{enumerate}
\item We propose a novel scheme, referred to as sparse OFDMA, \DG{where a device encodes its information and/or identity \LL{onto} a sparse set of orthogonal tones.} 
We exploit the fact that the tones' frequencies are invariant to delays. Sparse OFDMA 
is highly effective in an asynchronous setting, which is particularly appealing in the context of the IoT.

\item \DG{The proposed scheme} 
utilizes low-complexity point-to-point capacity-approaching codes 
\DG{and} exploits multiuser diversity \DG{using} 
successive cancellation. 

\item  
\DG{In the asymptotic regime where}
\LL{
the number of active devices and the maximum delay are sublinear relative to the device population, sparse OFDMA correctly detect all active devices with high probability, using an essentially sublinear codelength in the device population. With a bounded maximum delay, under mild conditions for the number of active devices, the computational complexity is also sublinear in the device population (due to a sparse Fourier transform).}



\end{enumerate}

\subsection{Paper Organization and Notations}

The rest of the paper is organized as follows. Section~\ref{sec:system} presents the system model and main results. Section~\ref{sec:signaling} describes the signalling scheme of sparse OFDMA. Section~\ref{sec:async} presents the asynchronous massive access algorithm. Section~\ref{sec:proof_sync} and Section~\ref{sec:proof_asyn} prove theoretical performance guarantees for synchronous and asynchronous transmissions, respectively. Section~\ref{sec:sim} presents some numerical results on a practical design. Section~\ref{sec:conclude} concludes the paper.

For ease of notation, the index of a vector or each dimension of a matrix starts from 0 throughout the paper. The elements of a $B \times C$ matrix are denoted as $y_b^c$, where $c = 0, \cdots, C-1$ and $b = 0, \cdots, B-1$. We write the $b$-th row vector as $\by_b = \left( y^0_b, \cdots, y^{C-1}_b \right)$ and the $c$-th column vector as $\by^c = \left( y^c_0, \cdots, y^c_{B-1} \right)^T$. We denote the real and imaginary parts of a variable $X$ as $X_R$ and $X_I$, respectively. All logarithms are base 2.

\section{System Model and Main Results}
\label{sec:system}

We focus on the uplink access problem in a system consisting of one access point and $N$ potential devices in total.
 Let $\{0, \cdots, N-1\}$ denote the set of device indices. An arbitrary subset of devices are \LL{within the range of the access point} and active, whose indices form the set $\mathcal{K} \subseteq \{0, \cdots, N-1\}$.  Let $K = |\mathcal{K}|$ denote the number of active devices. For tractability, we assume symbol synchrony without frame synchrony, i.e., the delay of each device's signal at the access point is an integer number of symbol intervals. We further assume the delays are similar. This can be accomplished by using a common beacon to trigger transmissions. 
Specifically, let the delay of any device relative to a reference at the access point be \DG{no more than} 
$M$ symbol intervals.  Fig.~\ref{fig:system} illustrates a small example with three devices.

Let $\bs_k = (s_{k,0},\dots,s_{k,L-1} )$ denote the $L$-symbol codeword transmitted by device $k$.
In the absence of frequency selectivity, the received signal at every (integer) time $i$ is given by
\begin{align}\label{eq:system_model}
x_i = \sum\limits_{k \in \mathcal{K}} a_{k} s_{k ,i-m_{k}} + w_i,
\end{align}
where $a_{k} \in \mathbb{C}$ is the channel coefficient, $m_{k}$ is the transmission delay of device $k$, and $w_i \sim \mathcal{C N} (0, 2 \sigma^2)$ are independently and identically distributed (i.i.d.) circularly symmetric complex Gaussian random variables. The discovery scheme is based on signals within a single codeword duration. From each receiver's point of view, the signal $s_{k,i} = 0$ for $i < 0$ and $i \geq L$ for all $k$. 


\begin{figure}
  \centering
  \includegraphics[width=9cm]{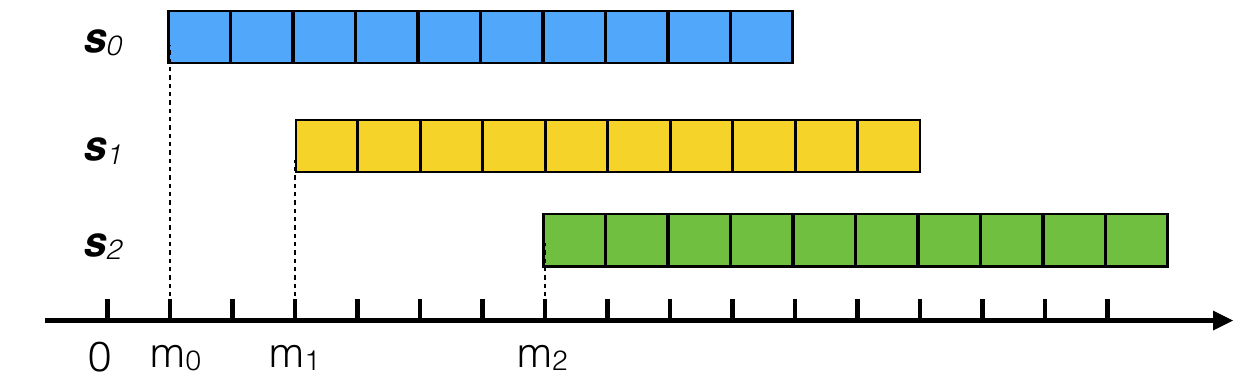}\\
  \caption{Frame-asynchronous symbol-synchronous three-user model.}\label{fig:system}
\end{figure}

We shall design a \LL{massive access} scheme with relatively small codelength and low computational complexity. 
The following two theorems are \LL{our} key results 
\DG{for the synchronous case and the asynchronous case, respectively}:

\begin{theorem}\label{theorem:noisy_discovery_sync}
Suppose 
each device's message set contains no more than $S$ messages.
Suppose, out of $N$ devices, an unknown subset of devices transmit. 
Suppose 
\DG{their signals} are perfectly aligned \DG{at the receiver, so that}
the maximum delay $M=0$.
Suppose \DG{also} the noise variance is fixed and the received signal amplitude of every active device is at least $a$. Then for every $a, \epsilon >0$, there exist $\alpha_0, \alpha_1, K_0 >0$ such that for every $N$ and $K$ satisfying \Lina{$N \geq K \geq K_0$}, there exists a code of length
\begin{align} \label{eq:Lsync}
    L \leq \alpha_0 K (\log N + \log S + \log K)
\end{align}
such that as long as no more than $K$ devices transmit, all their identities and messages will be decoded correctly with probability no less than $1-\epsilon$. Moreover, this can be accomplished using fewer than \Lina{$ \alpha_1 K (\log K) ( \log N + \log S + \log K)$} arithmetic operations.
\end{theorem}

\begin{theorem}\label{theorem:noisy_discovery_asyn}
Suppose 
each device's message set contains no more than $S$ messages.
Suppose, out of $N$ devices, an unknown subset of devices transmit. 
Suppose the 
delay of every active device is an integer number of symbol intervals less than $M \geq 0$. Suppose \DG{also} the noise variance is fixed and the received signal amplitude of every active device is bounded between $a$ and $\bar{a}$. Then for every $a, \bar{a}, \epsilon >0$ \DG{with} 
$a \leq \bar{a}$, there exist $\alpha_0, \alpha_1, K_0 >0$ such that for every $N$ and $K$ satisfying \Lina{$N \geq K \geq K_0$},
there exists a code of length
\begin{align} \label{eq:Lasync}
    L \leq \alpha_0 \left( (K+M) (\log N  + \log S + \log K) + K \log(M+1) \right)
\end{align}
such that as long as no more than $K$ devices transmit, all their identities and messages will be decoded correctly with probability no less than $1-\epsilon$. Moreover, this can be accomplished using fewer than \Lina{$ \alpha_1 ( K (\log K) (\log N + \log S + \log K) + K M^2 +  K^2 M  \log (KM+1) ) $} arithmetic operations.
\end{theorem}

Theorems~\ref{theorem:noisy_discovery_sync} \DG{and~\ref{theorem:noisy_discovery_asyn} provide} 
a scaling law for the complexity and codelength in terms of the device population,
the number of active devices,
\DG{and the delay bound}. Synchronous massive access, i.e., $M = 0$, requires a smaller codelength and fewer arithmetic operations than the asynchronous case. Theorem~\ref{theorem:noisy_discovery_asyn} reduces to Theorem~\ref{theorem:noisy_discovery_sync} in the spectial case of $M=0$. 


\DG{With fixed maximum delay $M$,} \LL{when the codebook size $S=O(N)$ and} the number of active users $K$ satisfies \LL{$K\log K = o(N)$}, the codelengths for both the synchronous and asynchronous schemes are sublinear in the \LL{device population $N$ according to Theorems \ref{theorem:noisy_discovery_sync} and \ref{theorem:noisy_discovery_asyn}}.
\LL{Furthermore, with fixed $M$, if $K^2\log K = o (N)$}, the number of arithmetic operations involved in the asynchronous scheme is also sublinear in $N$. 





\section{Sparse OFDMA Signaling}
\label{sec:signaling}

In this section, we propose a specific signaling 
scheme. Let the spectrum be divided into $B$ orthogonal subcarriers, \LL{where $B$ is much smaller than the device population $N$. It is not possible to design mutually orthogonal OFDM symbols for all $N$ devices. We let each active device transmit several OFDM symbols on a sparse subset of the subcarriers}. The scheme is thus referred to as sparse OFDMA. 
We shall show that sparse OFDMA of moderate symbol duration can accomodate \LL{a large number of devices with bursty transmissions}.

We develop sparse OFDMA in four steps. First, we consider noiseless device identification \LL{with fewer devices} than the number of subcarriers \LL{($N \le B$)}. Second, we consider noiseless device identification where $N > B$ and a single device is active. Third, we consider the general noisy device identification, where $N > B$ and $K \ll N$ devices are active. At last, we consider 
simultaneous device identification and message decoding.

\subsection{Noiseless Device Identification \LL{with $N \le B$}}

The key idea for addressing arbitrary \LL{delays} is to use the fact that the frequency of a sinusoidal signal is invariant to delay, where the delay merely causes a phase shift. Since the delay is bounded by $M$, we include $M$ samples in the OFDM symbol as a cyclic prefix. Hence, each OFDM symbol contains $B+M$ samples. Since $N \leq B$, \LL{device $k$ can be assigned the unique subcarrier $k$}. The \LL{transmitted} discrete-time signal structure is given by
\begin{align}\label{eq:codeword}
s_{k,i} =   g_{k} \exp \left( \frac{\iota 2 \pi k i}{B} \right), \quad i=0, \cdots, B+M-1,
\end{align}
where $g_{k} \in \mathbb{R}$ is a known design parameter of unit amplitude and $\iota^2 = -1$.

At the receiver side, the signals from all the neighbors arrive after a reference frame start point. The receiver discards the first $M$ samples of each sparse OFDMA symbol and collect the next $B$ samples as $\by = (y_0, \cdots, y_{B-1} )$, where $ y_i = x_{i+M} $, $i =0, \cdots, B-1$. If each device is assigned a unique subcarrier, performing $B$-point discrete Fourier transform (DFT) on $\by$ yields nonzero at the $k$-th subcarrier if and only if device $k$ is active. The delay $m_k$ only affects the phase of the DFT value. Therefore, the signaling scheme~\eqref{eq:codeword} is sufficient to detect the active devices in a noiseless case with computational complexity of $O(B \log B)$ needed by the Fast Fourier Transform (FFT) algorithm.

\subsection{Noiseless Single Device Identification \LL{with $N > B$}}
\label{sec:device_identify_b}

\LL{With $N > B$}, it is impossible to assign a distinct subcarrier to each device. Suppose we randomly assign subcarrier $b_k \in \{0, \cdots, B-1\}$ to device $k$ and still apply the signaling scheme give by \eqref{eq:codeword}. If performing $B$-point DFT on $\by$ yields nonzero at subcarrier $b$, we cannot \LL{be sure about} which device is active if multiple devices are assigned to this subcarrier. We resolve this ambiguity \DG{by including several OFDM symbols in a frame and embedding} 
the device's identity through coefficient $g_{k}$ in~\eqref{eq:codeword} across \LL{the} OFDM symbols for transmission.

Let \LL{$J = \lceil \log N \rceil$ and let $(k)_2 = (k_1, \cdots, {k}_{J})$} denote the binary representation of device index $k$. We \LL{simply adopt the design of $\left( g^0_{k}, \cdots, g^{J}_{k} \right) = \left( 1, (-1)^{k_1}, \cdots, (-1)^{k_{J} } \right)$}. We let device $k$ transmit \LL{$\left( \bs^0_{k}, \cdots, \bs^{J}_{k} \right)$}, where \LL{$\bs^j_{k} = (s^j_{k,0}, \cdots, s^j_{k,B+M-1})$ and
\begin{align}
s^j_{k,i} =   g^j_{k} \exp \left( \frac{\iota 2 \pi b_{k} i}{B} \right),
\end{align}
for $i = 0, \cdots, B+M-1$ and $j = 0, \cdots, J$.} The code length is thus \LL{$(J+1)(B+M)$} samples. For the \LL{$j$-th} OFDM symbol, we discard the first $M$ samples and use the next $B$ samples to form a vector \LL{$\by^j = (y^j_0, \cdots, y^j_{B-1})$}, where 
\LL{\begin{align}
y^j_i &= x_{i+j(B+M)+M} \\
&=a_k s_{k, i+M-m_k}^j,
\end{align}}
$i =0, \cdots, B-1$. Performing $B$-point DFT on \LL{$\by^j$} yields
\LL{
\begin{align}\label{eq:Yk_noiseless}
Y^j_b &= \frac{1}{B} \sum_{i=0}^{B-1}  \exp \left( - \frac{ \iota 2 \pi b i}{B} \right) a_k s_{k, i+M-m_k}^j \\
& = \left\{
\begin{array}{cc}
0, &\text{ if } b \neq b_k \\
A_{k,b}g_k^j,  &\text{ if } b=b_k,
\end{array} \right.
\end{align}
where
\begin{equation}\label{eq:Akb}
A_{k,b}=a_k\exp\left(\frac{\iota 2\pi b_k(M-m_k)}{B}\right).
\end{equation}}As in the $B \geq N$ case, the delay $m_k$ only affects the phase of the received signal from device $k$.

Subcarrier $b$ is associated with a \LL{length-$(J+1)$ vector $\bY_b = (Y_b^0, \cdots, Y_b^{J})$.} It can be seen that $g^0_{k}$ serves as a reference symbol capturing the channel coefficients. In our setting, $Y^0_{b} = A_{k,b}$.
Therefore, the $j$-th bit of the binary representation of $k$ can be estimated as ${k}_j = 0$ if $Y^{j}_{b} / Y^0_{b} = 1$ and ${k}_j = 1$ if $Y^{j}_{b} / Y^0_{b} = -1$.

\DG{Together,} $b_{k}$ (frequency) and 
\LL{$g_{k}^0,\dots,g_{k}^J$ (gains)}
carry the device index.
The correspondence between the device indices and the subcarriers is represented by a bipartite graph with $N$ left nodes and $B$ right nodes. \LL{The $n$-th left node is connected with the $b$-th right node if device $n$ transmits on the $b$-th subcarrier.}

\LL{\begin{definition}\label{def:G}[Bipartite graph induced by active devices in the sparse OFDMA] 
The bipartite graph induced by the active devices in sparse OFDMA consists of $K$ left nodes, corresponding to $K$ active devices, and $B$ right nodes, corresponding to $B$ subcarriers, where the $k$-th left node is connected to the $b$-th right node if device $k$ transmits on subcarrier $b$.
\end{definition}}

We call a subcarrier a \textit{zeroton}, \textit{singleton}, or \textit{multiton}, if no device, a single device, or multiple devices transmit on the subcarrier, respectively. Fig.~\ref{fig:bipartite} illustrates an example of bipartite graph with $B = 5$ subcarriers and $N=4$ devices, $K=3$ of which are active.
 In the example, subcarriers 0, 1, and 4 are zerotons, subcarrier 3 is a singleton, and subcarrier 2 is a multiton.

When there is a single active device in the noiseless setting, the signaling of sparse OFDMA consists of $\lceil \log N \rceil$ OFDM symbols. Therefore, the active device can be identified with code length of $O ((B+M) \log N  )$ samples and computational complexity of $O(B ( \log B ) ( \log N ) )$.

\begin{figure}
  \centering
  \includegraphics[width=6cm]{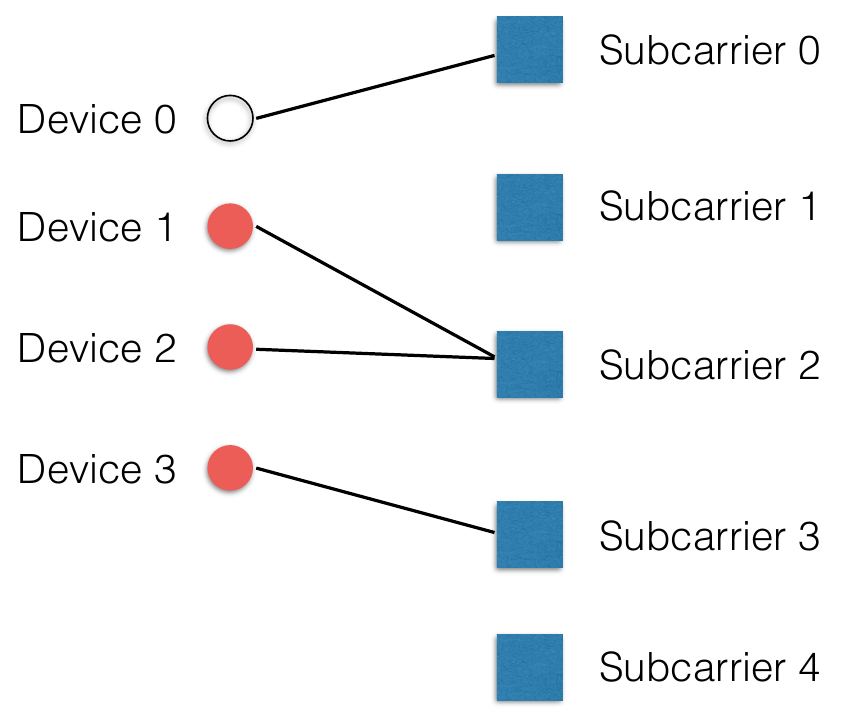}\\
  \caption{Bipartite graph representation of sparse OFDMA. Left nodes represent devices and right nodes represent subcarriers. The active devices are marked in red.  Subcarriers 0, 1 and 4 are zerotons, subcarrier 3 is a singleton, and subcarrier 2 is a multiton.}\label{fig:bipartite}
\end{figure}

\subsection{Identification of Multiple Active Devices With and Without Noise}
\label{sec:identify_multi_device}

When multiple devices are active, the active devices may use colliding subcarriers, so that the device information cannot always be directly recovered from $Y^{j}_{b_{k}} / Y^0_{b_{k}}$. We propose to let the devices transmit  on \textit{multiple} subcarriers. As in the case of a single active device, we first identify active devices from the singletons. The identified device information is then used to bootstrap the detection of other devices.\footnote{A related, simpler setting for device activity detection is group testing, e.g., in \cite{luo2008neighbor}, devices who do not violate the energy levels of all subcarriers are declared to be active.}

\begin{figure}
	\centering
	\includegraphics[width=12cm]{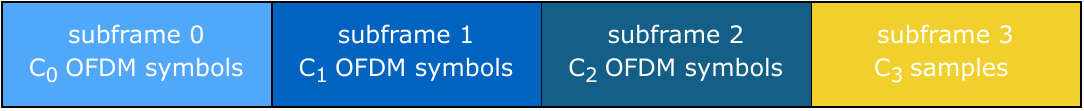}\\
	\caption{\LL{Frame structure of sparse OFDMA. A frame consists of four subframes.}}\label{fig:frame}
\end{figure}

The presence of noise raises additional questions: 1) How can we reliably estimate the channel coefficients? 2) How can we robustly estimate the device information in the noisy setting? 3) How can we \DG{detect whether} 
a subcarrier \DG{is} 
a zeroton, singleton, or multiton? In the following, we further enhance the signaling scheme to address those three challenges.
\LL{Specifically, a frame consisting of four subframes 
is described in Fig.~\ref{fig:frame}, where the 
subframes $0,1$ and $2$ are used for device identification and message decoding, and subframe $3$ is used for delay estimation.} 

\LL{We first introduce the signaling of subframes $0-2$ used for device identification. The subframes consist of $C_0, C_1,$ and $C_2$ OFDM symbols, respectively. Let $C = C_0 + C_1 + C_2$.} In every OFDM symbol, device $k$ is assigned to a fixed set of subcarriers, denoted as $\mathcal{B}_k \subseteq \{ 0, \cdots, B-1 \}$. Specifically, we let device $k$ transmit $\left( \bs^0_{k}, \cdots, \bs^{C}_{k} \right)$, where $\bs^c_{k} = (s^c_{k,0}, \cdots, s^c_{k,B+M-1})$ and
\begin{align}
s^c_{k,i} =   g^c_{k} \sum_{b \in \mathcal{B}_k} \exp \left( \frac{\iota 2 \pi b i}{B} \right),
\end{align}
for $i = 0, \cdots, B+M-1$ and $c = 0, \cdots, C$. 

For the $c$-th OFDM symbol, as in the previous cases, we discard the first $M$ samples and obtain the remaining $B$ samples as $\by^c$. Under the noisy setting, performing $B$-point DFT on $\by^c$ yields
\begin{align}\label{eq:Yk}
Y^c_b &= \sum_{k \in \mathcal{K}: b \in \mathcal{B}_k} A_{k,b} g^c_{k} + W^c_b , \quad b = 0, \cdots, B-1,
\end{align}
where $A_{k,b}$ is given by \LL{\eqref{eq:Akb}},
and $W^c_b$ are i.i.d.\ complex Gaussian variables with distribution $\mathcal{C N} (0, 2 \sigma^2 /B)$. The factor $B$ in the noise variance is due to the integration of $B$ samples in the DFT operation.

Let the design vector for device $k$ be
\begin{align} \label{eq:bgk}
\bg_k = \left(
\begin{array}{c}
\mathbf{1} \\
\tilde{\bg}_k \\
\dot{\bg}_k 
\end{array}
\right)
\end{align}
where the all-one vector $\mathbf{1}$ of length $C_0$, $\tilde{\bg}_{k} \in \mathbb{R}^{C_1}$, and $\dot{\bg}_{k} \in \mathbb{R}^{C_2}$ are the design vectors for the first $C_0$, next $C_1$, and the remaining $C_2$ OFDM symbols, respectively. \LL{The values of $C_0, C_1$, and $C_2$ will be specified in Section~\ref{sec:key_para}. The all-one segment is used for robust estimation of the channel coefficients. The $\tilde{\bg}_k$ segment is used to encode the device index information. We let the entries of $\dot{\bg}_{k}$ be generated according to i.i.d.\ Rademacher ($\pm 1$) variables, with $\mathsf{P} \{\dot{g}_k^c = \pm 1 \} = 1/2$, $c=0, \cdots, C_2-1$.  The \DG{$\dot{\bg}_k$} 
segment is used to mitigate possible false alarms.}

In the absence of noise, we let $\tilde{\bg}_{k} = \left( 1, (-1)^{k_1}, \cdots, (-1)^{k_{\lceil \log N \rceil} } \right)$, which carries the device information. In the noisy setting, the received symbols corresponding to $\tilde{\bg}_{k}$ are corrupted in general. 
In order to robustly estimate the information bits, which is the binary representation of the device's index $k$, we apply a linear error-control code of length \LL{$C_1 =\lceil \lceil \log N \rceil /R   \rceil $} symbols, where $R<1$ is the code rate. Let $\bG \in \mathbb{F}_2^{ \lceil \log N \rceil \times C_1}$ be the generator matrix of the error-control code.
Let $\left( r_{k,0}, \cdots, r_{k, C_1-1}  \right) = \left( {k}_1, \cdots, {k}_{\lceil \log N \rceil } \right) \bG$, where the operation is over the binary field. We construct $C_1 $ OFDM symbols with $\tilde{g}^c_{k} = (-1)^{r_{k,c}}$, $c= 0, \cdots, C_1 - 1$. A good set of codes are the low-complexity capacity approaching codes introduced in~\cite{barg2004error} \DG{(more on this in Section~\ref{sec:pf_step3})}.

\begin{figure}
  \centering
  \includegraphics[width=6cm]{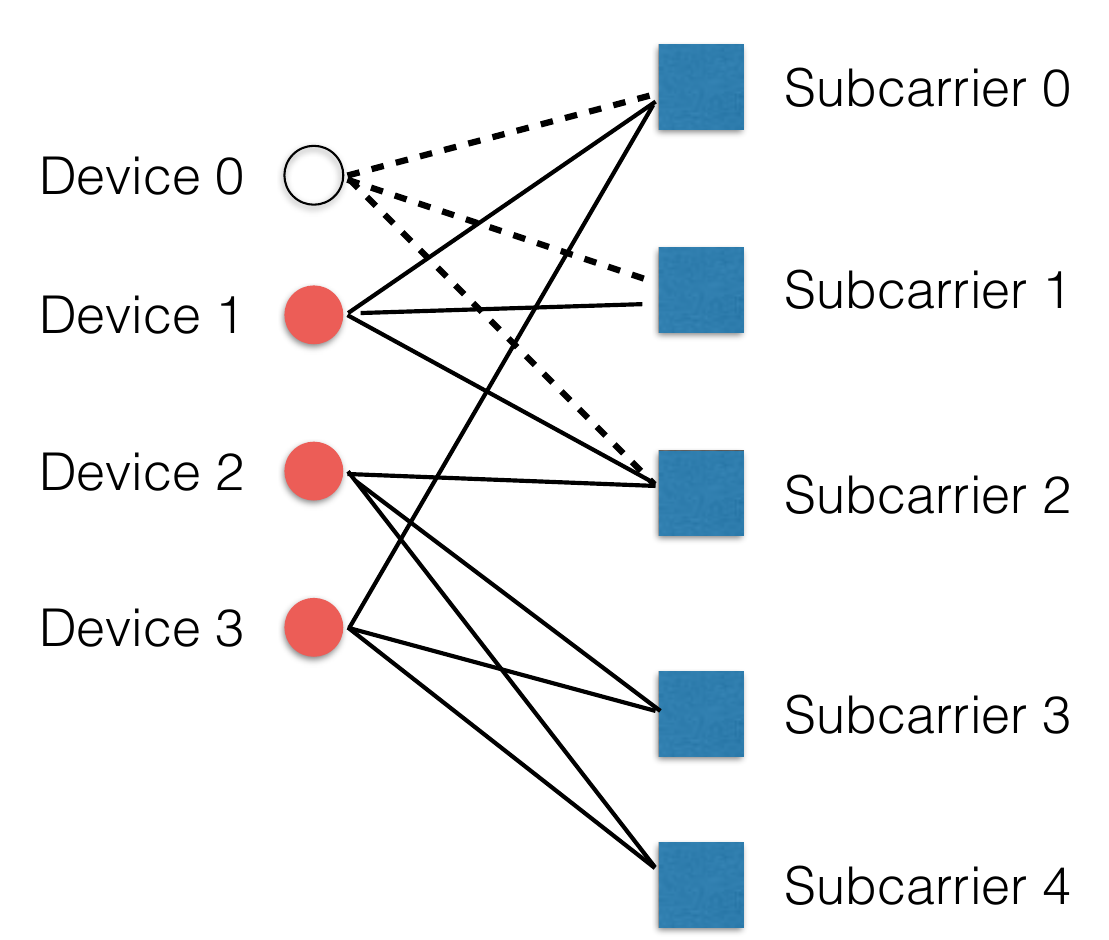}\\
  \caption{Bipartite graph representation of sparse OFDMA with $T=3$. }\label{fig:bipartite2}
\end{figure}

For device $k$, we let $|\mathcal{B}_{k}| = T$ and the set of subcarriers \DG{is equally likely to be any $T$-element subset of}
$\{0 , \cdots, B-1 \}$. \LL{In other words, every device transmits on exactly $T$ out of $B$ subcarriers.} Fig.~\ref{fig:bipartite2} illustrates an example for the case of $T= 3$. 

The last subframe consists of $C_3$ time-domain samples \LL{with low auto-correlation}, which are used to estimate device delays. Delays may be estimated by performing the correlation between the received signal and the pilot samples. In particular, the pilot samples are i.i.d.\ random Rademacher variables with length greater than $M$.

\subsection{Massive Access with Device Identification and Message Decoding}

Similar to embedding the device index information through $\tilde{\bg}_k$, each device can encode both a $\lceil \log N \rceil$-bit device index information and a $\lceil \log S \rceil$-bit message information through $\tilde{\bg}_k$. With a code rate of $R$, we have \LL{$C_1 = \lceil (\lceil \log N \rceil + \lceil \log S \rceil ) /R  \rceil$} OFDM symbols to carry both the device index and the message. The overall design vector of device $k$ is \DG{given by~\eqref{eq:bgk}.}
The DFT values at the $b$-th subcarrier $\bY_b$ is a vector of length $C$ and can be written as
\begin{align} \label{eq:yb}
\bY_b &= \left(
\begin{array}{c}
\bar{\bY}_b \\
\tilde{\bY}_b \\
\dot{\bY}_b 
\end{array}
\right) \\
& = \sum_{k \in \mathcal{K}: b \in \mathcal{B}_k } A_{k,b}  \left(
\begin{array}{c}
\mathbf{1} \\
\tilde{\bg}_{k} \\
\dot{\bg}_{k} 
\end{array}
\right)
+
\left(
\begin{array}{c}
\bar{\bW}_b \\
\tilde{\bW}_b \\
\dot{\bW}_b 
\end{array}
\right)
\end{align}
\DG{where the dimensions of signals $\bar{\bY}_b$, $\tilde{\bY}_b$, and $\dot{\bY}_b$ are $C_0$, $C_1$, and $C_2$, respectively, so are the dimensions of the noise vectors $\bar{\bW}_b$, $\tilde{\bW}_b$, and $\dot{\bW}_b$.}

\section{Device Identification and Message Decoding} 
\label{sec:async}

We first describe a robust subcarrier detection scheme that achieves two goals: 1) It can distinguish whether a subcarrier is a zeroton, a singleton, or a multiton (as defined in Section \ref{sec:device_identify_b}); 2) For singleton subcarriers, it can detect the device index reliably. We then describe the overall identification and decoding scheme.

\subsection{Robust Subcarrier Detection}

We focus on a certain device $k$ that is hashed to subcarrier $b$. The frequency-domain values are given by~\eqref{eq:yb}.

\subsubsection{Channel Phase Estimation}

We first 
estimate the phase of $A_{k,b}$ as
\begin{align}\label{eq:phase}
\LL{\hat{\theta}_b} &= \angle \left( \frac{1}{C_0} \sum\limits_{c=0}^{C_0 - 1} \bar{Y}^c_{ b } \right).
\end{align}
Suppose device $k$ transmits on a singleton subcarrier and $C_0$ is large enough, we can obtain sufficiently accurate estimate of the channel phase.

\subsubsection{Device Identification and Message Decoding}
\label{sec:device_identification}

With the phase estimation \LL{$ \hat{\theta}_b$}, 
we can compensate the phase of $A_{k,b}$ and try to decode the device index information. For each subcarrier $b$, we perform (hard) binary decision on $ \text{Re} \left\{ \tilde{Y}^c_b e^{- \iota \LL{\hat{\theta}_b}} \right\}$ for all $C_1$ symbols, and then decode the index and message. We focus on singleton subcarriers, because this allows us to apply the well-studied point-to-point capacity approaching codes. A more sophisticated multiuser decoding method can also be used. In Section~\ref{sec:proof_sync}, we show that single-user decoding is sufficient to establish the desired scaling laws.

\subsubsection{Singleton Detection and Verification}

First, we declare subcarrier $b$ is a zeroton \LL{if $\Vert\dot{\bY}_b\Vert^2<\eta$, where $\eta$ is some constant threshold}. Otherwise, we attempt to estimate the index using the method described in Section~\ref{sec:device_identification}, assuming subcarrier $b$ is a singleton. Suppose $\hat{k}$ is the estimated index. We perform the following validation process. We estimate the nonzero signal as
\begin{align}\label{eq:chnl_est}
\dot{A}_{\hat{k}, b} = \frac{1}{C_2} \dot{\bg}_{\hat{k}}^{\dagger} \dot{\bY}_b,
\end{align}
where $\dot{g}_k^c $, $c=0, \cdots, C_2-1$, are as described in Section~\ref{sec:identify_multi_device}.
Then we declare that subcarrier $b$ is a singleton if and only if it passes the energy threshold test, i.e.,
\begin{align}\label{eq:multiton_check}
|| \dot{\bY}_b - \dot{A}_{\hat{k},b} \dot{\bg}_{\hat{k}} ||_2^2 \leq  \eta.
\end{align}

The above validation scheme is similar to that used for sparse DFT and sparse Walsh-Hadamard transform (WHT), where the singleton verification approach has been proved to be correct with high probability for signal amplitudes lying in a known discrete alphabet~\cite{pawar2014robust,li2014spright}. In this paper, we further show that it can effectively identify the singletons for \textit{arbitrary} 
\DG{analog} amplitudes that are bounded away from zero, as specified in Theorems~\ref{theorem:noisy_discovery_sync} and~\ref{theorem:noisy_discovery_asyn}.

\subsubsection{Overall Subcarrier Detection}

Based on the preceding process, the robust subcarrier detection algorithm is illustrated in Algorithm~\ref{algo:robust_bin_detect}. For each subcarrier $b$, the DFT values is decomposed into three segments as in~\eqref{eq:yb}. 
If the subcarrier is not a zeroton, we first estimate the phase of channel coefficient based on $\bar{\bY}_b$, which is used to compensate the phase of $\tilde{\bY}_b$. We then estimate the device index as $\hat{k}$. The channel coefficient is estimated as $\dot{A}_{\hat{k},b}$ according to \eqref{eq:chnl_est}. If \eqref{eq:multiton_check} is satisfied, a singleton is declared, the device $\hat{k}$ is estimated to be active, and the message is decoded. Otherwise, a multiton is declared.
 
\begin{algorithm}
\caption{Robust-Subcarrier-Detect ($\bY$)}
\label{algo:robust_bin_detect}
\begin{algorithmic}[]
\STATE \textbf{Input}: Subcarrier values $\bY = \left( \bar{\bY}^{\dagger}, \tilde{\bY}^{\dagger}, \dot{\bY}^{\dagger} \right)^{\dagger}$, where $\bar{\bY} \in \mathbb{C}^{C_0}$, $ \tilde{\bY} \in \mathbb{C}^{C_1}$ and $\dot{\bY}  \in \mathbb{C}^{C_2}$.
\STATE \textbf{Output}: Declaration of zeroton/singleton/multiton, and estimates of index and data if applicable.
\IF {$   || \dot{\bY} ||^2 < \eta $}
\STATE declare zeroton and return.
\ENDIF
\STATE  $  \hat{\theta} \leftarrow  \text{phase} \left(  \mathbf{1}^T \bar{\bY} / C_0 \right).$
\STATE $\tilde{\bZ} \leftarrow  {\rm Re} \{ \tilde{\bY} e^{- \iota \hat{\theta} } \} .$
\STATE $(\hat{k}, \hat{\tilde{g}}_{\hat{k}})  \leftarrow \text{Decode} (\tilde{\bZ}).$
\STATE $\dot{A}_{ \hat{k} } \leftarrow \dot{\bg}_{\hat{k}}^{\dagger} \dot{\bY} / C_2$.
\IF {$|| \dot{\bY} - \dot{A}_{\hat{k}} \dot{\bg}_{\hat{k}} ||_2^2 \leq \eta$} 
\STATE {declare singleton and return $\left( \hat{k}, \hat{\tilde{g}}_{\hat{k}} \right)$.}
\ELSE
\STATE {declare multiton and return.}
\ENDIF
\end{algorithmic}
\end{algorithm}

\subsection{Overall Framework}
\label{sec:overall}

\begin{algorithm}
\caption{Asynchronous Massive Access via Sparse OFDMA}
\label{algo:suuc_cancel}
\begin{algorithmic}[]
\STATE \textbf{Input:} Subcarrier values $\bY_b$, $b=0, \cdots, B-1$.
\STATE \textbf{Output:} Detected active device set $\hat{\mathcal{K}}$ and their messages.
\STATE \textit{Initialize:} Set $\mathcal{B}$ to be the set of all subcarriers. Set $\hat{\mathcal{K}} $ and $\mathcal{L}$ to be the empty set. 
\FOR {every subcarrier $b$}
\IF {Robust-Subcarrier-Detect $\left( {\bY}_b\right)$ declares a singleton $\left( \hat{k}, \hat{\tilde{\bg}}_{\hat{k}} \right)$ }
\STATE Add $\hat{k}$ to $\mathcal{L}$.
\ENDIF
\ENDFOR
\WHILE {fewer than $K$ iterations and $\mathcal{L}$ is not empty}
\STATE Pick arbitrary $k \in \mathcal{L}$, remove $k$ from $\mathcal{L}$, and add $k$ to $\hat{\mathcal{K} } $ .
\STATE Estimate $\hat{m}_{k}$ and $\hat{a}_{k}$ according to~\eqref{eq:est_delay} and \eqref{eq:channel_estimate}.
\STATE Set $\mathcal{S}$ to be the set of subcarriers in $\mathcal{B}$ that are connected with $k$.
\FOR {every subcarrier $b' \in \mathcal{S}$}
\STATE Cancel the signal of device $k$ from subcarrier $b'$ using \eqref{eq:succ_canel}.
\IF {Robust-Subcarrier-Detect  $\left({\bY}_{b'} \right)$ declares a zeroton or a singleton $ \left( \hat{k}, \hat{\tilde{g}}_{\hat{k}} \right)$}
\STATE Remove $b'$ from $\mathcal{B}$.
\STATE Add $\hat{k}$ to $\mathcal{L}$ (if a singleton is declared).
\ENDIF
\ENDFOR
\ENDWHILE
\end{algorithmic}
\end{algorithm}

Once a device index is estimated based on a singleton, its contributions to all its connected subcarriers \DG{are} 
canceled out, which may result in new singleton subcarriers. For example, in Fig.~\ref{fig:bipartite2}, device 1 is first detected from the singleton subcarrier 0 and its values are subtracted from subcarrier 2. Then subcarrier 2 becomes a singleton subcarrier and device 2 can be detected from it. There is, however, one challenge. The DFT values from each device at a subcarrier depends on its delay due to \eqref{eq:Akb}. We need to estimate the delay using subframe 3 of the received signal in order to perform successive cancellation.

Let $\mathcal{I}$ denote the time-domain indices corresponding to the last subframe with the first $M$ samples skipped. Define the decision statistic based on $\mathcal{I}$ as:
\begin{align}\label{eq:metric}
\mathcal{T}_k (m) =  \sum_{i \in \mathcal{I}}  y_{i+m} s_{k,i+M}^{\ast} .
\end{align}
The reason of discarding the first $M$ samples in the \LL{last} subframe is to guarantee that the cross-correlation of the pilot sequences between different users is performed entirely on the Rademacher random variables.
We estimate the delay of device $k$ as
\begin{align}\label{eq:est_delay}
\hat{m}_{k}  \in \arg\max_{m = 0, \cdots, M-1} \text{Re} \{ \mathcal{T}_k (m) \}. 
\end{align}
The channel coefficient is then estimated to be 
\begin{align}\label{eq:channel_estimate}
\hat{a}_{k} = \frac{1}{C} \bg_k^{\dagger} \bY_b e^{-\frac{\iota 2 \pi b (M-\hat{m}_k)}{B} }.
\end{align}
The DFT values of a connected unprocessed subcarrier $b'$ are then updated according to
\begin{align}\label{eq:succ_canel}
\bY_{b'} \leftarrow \bY_{b'} -  \hat{a}_{k} \exp \left( \frac{\iota 2 \pi b' (M - \hat{m}_{k} ) }{B} \right) \bg_{k}.
\end{align}

The overall massive access framework is described in Algorithm~\ref{algo:suuc_cancel}. Throughout the algorithm, we maintain three lists: $\hat{ \mathcal{K} }$ is a list storing the estimated device indices, $\mathcal{L}$ is a list of detected devices for cancellation, and $\mathcal{B}$ is a list of surviving subcarriers.
%
We first detect all singleton subcarriers and identify their corresponding devices using Algorithm~\ref{algo:robust_bin_detect}. Then we successively cancel each identified device, potentially exposing more singletons along the way to allow more devices to be identified. This process continues until no singleton subcarrier is left. 

We next prove that the preceding sparse OFDMA can efficiently identify the active devices and decode their messages.  We will first prove the synchronous case ($M=0$) in Section~\ref{sec:proof_sync}, and then extend the proof to the asynchronous case in Section~\ref{sec:proof_asyn}. 

\section{Proof of Theorem \ref{theorem:noisy_discovery_sync} (the Synchronous Case)}
\label{sec:proof_sync}

\subsection{Key Parameters and Propositions}\label{sec:key_para}


For codeword construction, we choose constants
\begin{equation}\label{eq:para_start}
T\geq 3 
\end{equation}
and
\begin{equation}\label{eq:para_beta}
\LL{\beta_0\geq T(T-1)+1,}
\end{equation}
e.g., by letting $T=3$ and \LL{$\beta_0 = 7$}. 
We also set the following parameters as 
\begin{align}
\label{eq:para_B} B & = \beta_0 K \\
C_0 &= \lceil \log N \rceil \label{eq:C0} \\
 C_1 &= \LL{\lceil  (\lceil \log N \rceil + \lceil \log S \rceil ) / R \rceil }\\
\label{eq:para_end}  C_2 &=\Lina{\lceil\beta_1 \log K \rceil},
\end{align}
\Lina{where $\beta_1 > 0$ is a constant to be} specified in Appendix~\ref{append:proof_singleton0},
and $R<1$ is \LL{a constant denoting the rate of a low-complexity capacity-approaching code for the binary-symmetric channel (BSC) to be further explained in Appendix~\ref{append:proof_singleton1}.} 
The total number of OFDM symbols is
\begin{align}
\Lina{C \leq (1 + \lceil 1 / R \rceil)  \lceil \log N \rceil +  \lceil 1 / R \rceil \lceil \log S \rceil + \lceil \beta_1\log K \rceil.}
\end{align}
\DG{The codelength $L=BC=\beta_0KC$ thus satisfies~\eqref{eq:Lsync} as long as $\beta_0$, $\beta_1$, and $R$ are positive constants.}


\LL{
\DG{In this proof, we shall invoke several results about hypergraphs.}
\XC{A hypergraph is a generalization of a graph in which an edge can join any number of vertices. An edge in a hypergraph is also called a hyperedge.
A hypergraph is a hypertree if the graph is a tree, i.e., no cycle exists. A hypergraph is a unicyclic component if the graph contains only one cycle. A $T$-uniform hypergraph is a hypergraph where all the hyperedges have degree-$T$.
}\DG{Here we let} the device nodes \DG{become} 
hyperedges and the subcarriers \DG{become hypergraph} 
vertices.  
\DG{A} hyperedge is incident on a vertex if the corresponding device node is connected to the corresponding subcarrier. 
}

\LL{Let $G$ denote the bipartite graph induced by the active devices (see Defition~\ref{def:G}). 
We now characterize the probability that $G$ is such that no two active devices share the same set of subcarriers, so that it is convertible to an equivalent hypergraph (denoted as $G\in\hat{\mathcal{G}}$). Since there are $\binom{B}{T}$ distinct subsets of $T$ subcarriers, we have
\begin{align}
  \mathsf{P}\{G\in\hat{\mathcal{G}}\}
  =
  \binom{B}{T}\left[\binom{B}{T}-1\right]\cdots\left[\binom{B}{T}-K+1\right]
  \Bigg/
  \left(\binom{B}{T}\right)^K.
\end{align}
Since $\binom{B}{T}\geq O(K^3)$ for $T\geq 3$, we have
\begin{equation}
   \mathsf{P}\{G\in\hat{\mathcal{G}}\}\geq 1-\frac{\zeta}{K}
\end{equation}
with some constant $\zeta$ \cite[The birthday problem]{feller1957introduction}. }


 By the construction of the sparse OFDMA in Section \ref{sec:identify_multi_device}, every device is associated with exactly $T$ subcarriers, so the hypergraph induced by the active devices is a (random) $T$-uniform hypergraph.
\begin{definition}\label{def:gensemble}[An ensemble of hypergraphs]
Let $\mathcal{G}$ denote the ensemble of $T$-uniform hypergraphs with $K$ hyperedges and $B$ vertices that consist of components each of which is either a hypertree or a unicyclic component, where no component has more than \LL{$\alpha T \log (\beta_0K)/ (T-1)$ hyperedges} with some constant $\alpha$.
\end{definition}

In the following, we will first show that there exists \LL{$\alpha$ such that $G$ is convertible to a hypergraph in $\mathcal{G}$ (denoted as $G \in \mathcal{G}$)} with high probability (Proposition~\ref{claim:step1}).
We then characterize the error propagation effects in the case of $G \in \mathcal{G}$ (Proposition~\ref{claim:step2}) and show that the Robust-Subcarrier-Detect makes the correct decision with high probability (Proposition~\ref{claim:step3}). It follows then that synchronous massive access via Algorithm~\ref{algo:suuc_cancel} succeeds with high probability (Proposition~\ref{claim:step4}). \XC{The following propositions will be proved in Sections~\ref{sec:proof_claim1} to
~\ref{sec:pf_step3}.}


\begin{proposition}\label{claim:step1}
Under \eqref{eq:para_start}-\eqref{eq:para_end}, \LL{there exist constants $K_0'>0$, and \LL{$\nu >1$} dependent on $\beta_0$, such that for every $K \geq K_0'$},
the hypergraph $G$ satisfies $\mathsf{P} \left\{ G \in \mathcal{G} \right\} \geq 1 - \LL{\nu} / K$.
\end{proposition}

\begin{proposition}\label{claim:step4}
If $G\in\mathcal{G}$, then Algorithm~\ref{algo:suuc_cancel} will detect all active devices and decode their messages correctly as long as during its execution Algorithm~\ref{algo:robust_bin_detect} always makes the correct decision.
\end{proposition}

\begin{proposition}\label{claim:step2}
\LL{Suppose \eqref{eq:para_start}-\eqref{eq:para_end} hold,} \XC{$G \in \mathcal{G}$, and Algorithm~\ref{algo:robust_bin_detect}  always makes the correct decisions in the previous $t-1$ iterations during the execution of Algorithm~\ref{algo:suuc_cancel}. Let $S(t-1)$ be the set of recovered devices. At iteration $t$, the DFT value at subcarrier $b$ \DG{can be expressed as} 
\begin{align}\label{eq:single_Y}
\bY_{b} = \sum_{k \in \mathcal{K} \backslash S(t-1): b \in \mathcal{B}_k} A_{k,b} \bg_k + \bW_{b} + \bV_{b},
\end{align}
where the sum is over the set of active devices that are hashed to subcarrier $b$ and not yet recovered, and $\bV_b$ is due to the residual channel estimation errors from the recovered devices.  Moreover, conditioned on the design parameters $\bg_{\ell}$ of the previously identified devices, each entry of $\bV_{b}$ is Gaussian variable with zero mean and variance bounded by \LL{$\frac{\beta_2\sigma^2}{B}$ with a constant $\beta_2$.}}
\end{proposition}

\begin{proposition}\label{claim:step3}
\LL{Suppose \eqref{eq:para_start}-\eqref{eq:para_end} hold and $G \in \mathcal{G}$. There exists a constant $\eta$ such that for every $t=1,2,\cdots$, conditioned on that Algorithm~\ref{algo:robust_bin_detect} makes correct decisions in the first $t-1$ iterations during the execution of Algorithm~\ref{algo:suuc_cancel}, Algorithm~\ref{algo:robust_bin_detect} makes a wrong decision with probability no greater than $7K^{-2}$ in the $t$-th iteration.}
\end{proposition}

\LL{Assuming Propositions~\ref{claim:step1}-\ref{claim:step3} hold, we upper bound the massive access error probability $P_s$ as follows.} Let $\mathcal{E}$ denote the event that Algorithm~\ref{algo:robust_bin_detect} makes at least one wrong decision during the execution of Algorithm~\ref{algo:suuc_cancel}. By Proposition~\ref{claim:step4}, massive access succeeds if $G \in \mathcal{G}$ and that $\mathcal{E}$ does not occur. Evidently,
\begin{align}
\LL{\label{eq:supp_fail}P_s\leq \mathsf{P} \{ \mathcal{E} | G \in \mathcal{G}\} + \mathsf{P} \{ G \notin \mathcal{G}|G\in\hat{\mathcal{G}}  \}+\mathsf{P}\{G\notin\hat{\mathcal{G}}\}.}
\end{align}

Every time a device is recovered, Algorithm~\ref{algo:robust_bin_detect} is performed on its connected subcarriers. Since there are $K$ active devices and each of them is connected to $T = O(1)$ subcarriers, Algorithm~\ref{algo:robust_bin_detect} runs for at most $K T$ times throughout the detection process. By the union bound and the result of Proposition~\ref{claim:step3},
\begin{align}
\mathsf{P} \{\mathcal{E} | G \in \mathcal{G} \} &\leq  1- \left( 1-\frac{7}{K^2} \right)^{KT}  \\
& \leq \frac{7 T}{K}\label{eq:union_bin_error}.
\end{align}


Combining Proposition~\ref{claim:step1}, \eqref{eq:supp_fail}, and \eqref{eq:union_bin_error}, massive access fails with probability 
\LL{\begin{align}
P_s \leq (7 T + \nu+\zeta) / K.
\end{align}
Therefore, given the choice of $T$, $B$, and $C$, massive access fails with probability less than $\epsilon$ for $K \geq \max\{(7 T + \nu+\zeta)/\epsilon,K_0'\}$. 
}

\subsection{Proof of Proposition~\ref{claim:step1}}
\label{sec:proof_claim1}


\LL{Let $\mathcal{G}_0$ denote the ensemble of $T$-uniform hypergraphs with $K$ hyperedges and $B$ vertices consisting of only hypertrees and unicylic components. For a given constant $\alpha$, let $\mathcal{G}_1$ denote the ensemble of hypergraphs with the largest component containing fewer than $\alpha T\log (\beta_0K) / (T-1)$ hyperedges, and $\mathcal{G}_{2}$ denote the ensemble of hypergraphs with the largest component containing fewer than $\alpha T \log B$ \XC{vertices}. Evidently, $\mathcal{G}=\mathcal{G}_0\cap\mathcal{G}_1$.}

\LL{Due to our choice of $K$ and $B$, the $T$-uniform hypergraph is in the so-called subcritical phase, which guarantees some simple structural properties. In particular, \cite[Theorem 4]{karonski2002phase} establishes that the hypergraph is entirely composed of hypertrees and unicyclic components with high probability. To be precise, the proof in \cite{karonski2002phase} \DG{implies} that}
\LL{\begin{equation}\label{eq:g0}
\mathsf{P}\{G\in\mathcal{G}_0\}\geq 1-\frac{\nu-1}{K}
\end{equation}
\DG{for} some constant \DG{$\nu>1$} depend\DG{ent} 
on $\beta_0$.}

\LL{We next show that there exists a constant $\alpha$, such that $G\in\mathcal{G}_2$ with high probability.} The size of a hypergraph is defined as the number of hyperedges $K$ in the graph. The number of vertices in the hypergraph is $B$. \XC{The average vertex degree of a vertex $v$ is defined as the expected number of pairs of $(v_i, e_i)$, where $v_i$ and $e_i$ are some vertex and hyperedge in the graph, such that $(v,v_i)$ are connected via hyperedge $e_i$.}

\begin{lemma} \label{lemma:hypergraph}
\LL{$G$} has an average vertex degree of $K T (T-1) /B$.
\end{lemma}

Proof of Lemma~\ref{lemma:hypergraph}: 
Consider a subcarrier $b$ (a vertex). Let $X_k =1$ if device $k$ uses subcarrier $b$, and $X_k=0$, otherwise. \LL{Device $k$ has $\binom{B}{T}$ equally likely choices for its subcarriers, where $\binom{B-1}{T-1}$ of those choices include subcarrier $b$.} The average vertex degree is thus calculated as 
\begin{align}
\mathsf{E} \left\{  \sum_{k \in \mathcal{K}} (T-1) X_k \right\} 
&=  K (T-1) \mathsf{E} \left\{ X_k \right\} \\
&=K(T-1) \frac{\binom{B-1}{T-1}}{\binom{B}{T}} \\
& = \frac{K T (T-1)}{ B}.
\end{align} \qed


\XC{
Consider a random hypergraph with $B$ vertices and range $T$.{\footnote{The size of the largest edge is called range of the hypergraph. A $T$-uniform hypergraph has a range of $T$.}} Since $B \LL{>} K T (T-1)$ due to \eqref{eq:para_beta} and \eqref{eq:para_B}, the average {\em vertex degree} is less than 1 by Lemma~\ref{lemma:hypergraph}.
\LL{Then it has been shown in \cite[Theorem 3.6]{schmidt1985component} that, since $G$'s average vertex degree is less than 1, \LL{for $K\geq K_0'$ with a large enough $K_0'$,} there exists a constant $\alpha$, such that}
\begin{equation}\label{eq:g2}
\LL{\mathsf{P}\{G\in\mathcal{G}_2\}\geq1-\frac{1}{K}.}
\end{equation}

Let $r$ and $s$ be the number of vertices and hyperedges of a connected component, respectively. If the component is a hypertree, $r = (T-1)s +1$. If the component is a unicylic component, $r= (T-1)s$. Therefore, when $G \in  \mathcal{G}_0$ and $ G \in \mathcal{G}_2$, the number of hyperedges in the largest component is upper bounded as 
\begin{align}
s &\leq r/(T-1) \\
& \leq \alpha T \log B/(T-1)\\
&=\alpha T\log(\beta_0 K)/(T-1).
\end{align}
\LL{It implies 
\begin{equation}\label{eq:gcap}
\mathcal{G}_0\cap\mathcal{G}_2\subset\mathcal{G}_1.
\end{equation}} Therefore, we have
\LL{
\begin{align}
\mathsf{P} \left\{ G \in \mathcal{G} \right\} &= \mathsf{P} \left\{  G \in  (\mathcal{G}_0 \cap  \mathcal{G}_1) \right\}\label{eq:g0g1} \\
&\geq \mathsf{P} \left\{G\in(\mathcal{G}_0\cap\mathcal{G}_2) \right\}\label{eq:g0g2}\\
&\geq 1-\frac{\nu}{K},\label{eq:Gbound}
\end{align}
where \eqref{eq:g0g1} follows from Definition \ref{def:gensemble} and definitions of $\mathcal{G}_0$ and $\mathcal{G}_1$, \eqref{eq:g0g2} is due to \eqref{eq:gcap}, and \eqref{eq:Gbound} is due to \eqref{eq:g0} and \eqref{eq:g2}.}}

\subsection{Proof of Proposition~\ref{claim:step4}}
\label{sec:pf_step4}

Every graph in $\mathcal{G}$ consists of only hypertrees and unicyclic components. By \cite[Lemma 3.4]{price2011efficient}, every hypertree has a hyperedge (device node) incident on at least $T-1$ singleton subcarriers. Every unicyclic component has a hyperedge that is incident on either $T-1$ or $T-2$ singleton subcarriers. \XC{ When $T \geq 3$, every hypergraph in $\mathcal{G}$ always has singleton subcarriers.
Assuming Algorithm~\ref{algo:robust_bin_detect} correctly detects the singleton subcarriers and estimates the devices' indices,} the detected devices will be removed from the hypergraph. Since removing any device node or hyperedge in $G \in \mathcal{G}$ still yields a hypergraph in $\mathcal{G}$,  Algorithm~\ref{algo:suuc_cancel} will continue until all active devices are correctly detected.

\subsection{Proof of Proposition~\ref{claim:step2}}

We make use of the error propagation graph proposed in \cite{chen2016generalized} to characterize the residual channel estimation errors.  \XC{The error propagation graph is a directed subgraph induced by the device identification Algorithm~\ref{algo:suuc_cancel}. It begins from some singleton subcarriers. Every singleton subcarrier points to the device node that should be detected based on it. Every device node in the graph points to all subcarriers it is then cancelled from. Fig.~\ref{fig:err_prop} illustrates the error propagation subgraph concerning device 2. In the error propagation graph, device 0 is estimated from singleton subcarrier-a and its values are cancelled from the connected subcarrier-b and subcarrier-c. In a subsequent iteration, subcarrier-b becomes a singleton. Device 1 can be detected and its values are cancelled from subcarrier-c. In a third iteration, subcarrier-c becomes a singleton and device 2 is detected. 
  }

\begin{figure}
  \centering
  \includegraphics[width=7cm]{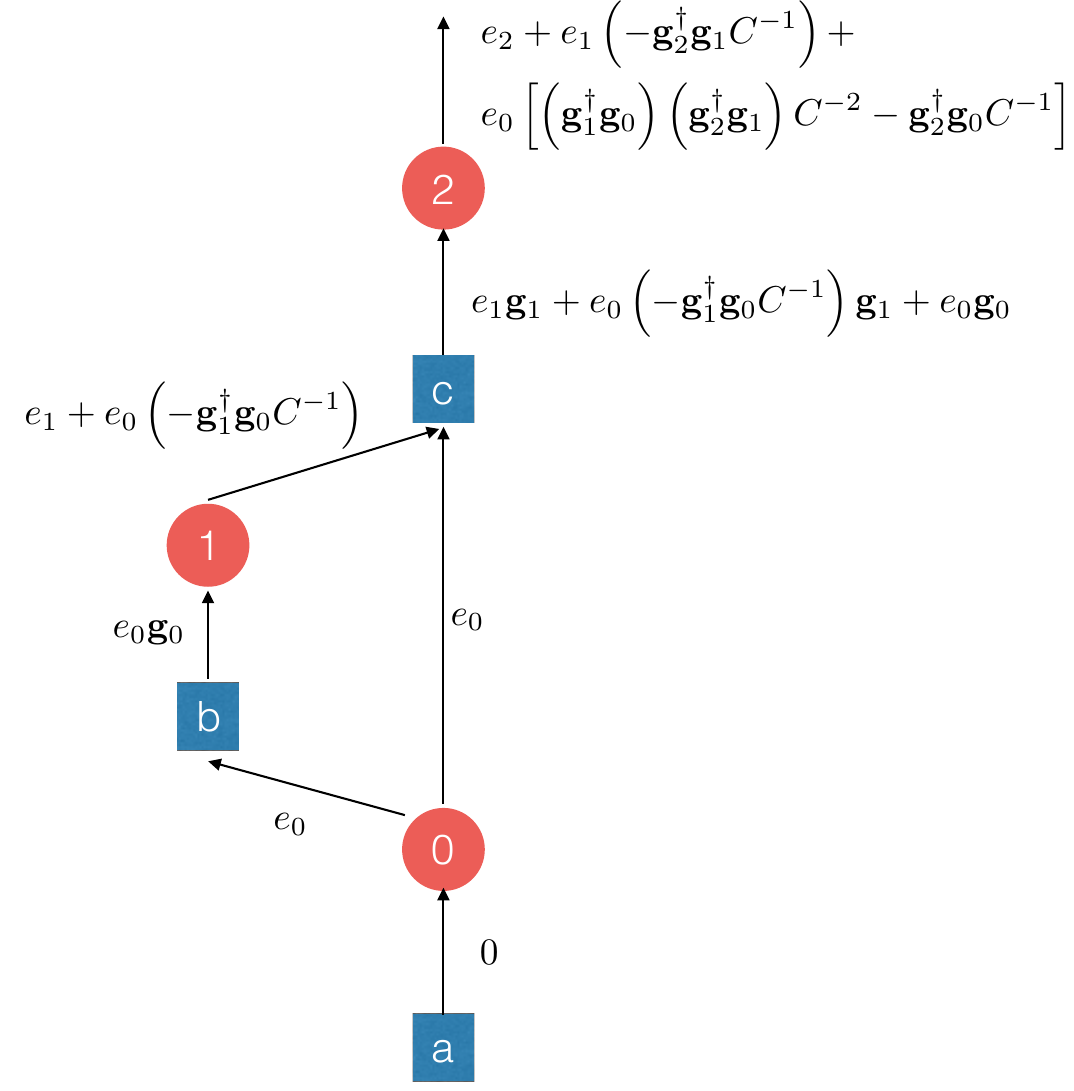}\\
  \caption{Error propagation graph for device 2. Device 0 is detected from subcarrier-a at iteration $t=1$, device 1 is detected from subcarrier-b at iteration $t=2$, and device 2 is detected from subcarrier-c at iteration $t=3$. }\label{fig:err_prop}
\end{figure}

Let $b_k$ be the immediate singleton subcarrier used to recover the device $k$.
Let $A_k$ be the channel estimation error of device $k$ defined as 
\begin{align}\label{eq:residual_error_k}
A_k &= \hat{A}_{k,b_k}- A_{k,b_k} ,
\end{align}
where $b_k$ is the subcarrier which will be used to estimate the device index $k$, $A_{k,b_k}$ is given by~\eqref{eq:Akb} and 
\begin{align}
\hat{A}_{k,b_k} = \bg^{\dagger}_k \bY_{b_k}/ C
\end{align} 
is the estimate according to~\eqref{eq:channel_estimate} in this case with no delay. 

We will keep track of the errors using the graph. The error $A_k$ causes cancellation error $A_k g_k$ to every downstream subcarrier node. The errors accumulated by a singleton subcarrier adds to the estimation errors of its downstream device nodes. 

Consider the first iteration and a singleton subcarrier $b_k$ due to device $k$. The DFT value of subcarrier-$b_k$ is given by
\begin{align}
\bY_{b_k} &= A_{k,b_k} \bg_k + \bW_{b_k}, 
\end{align}
so the residual estimation error is given by 
\begin{align}
A_k & = C^{-1} {\bg}_k^{\dagger} {\bW}_{b_k}.
\end{align}
%

At iteration $t \geq 2$, with successive cancellation, the updated frequency value at subcarrier $b$ is
\begin{align}
\bY_{b} &= \sum_{k \in \mathcal{K} \backslash S(t-1): b \in \mathcal{B}_k} A_{k,b} \bg_k +  \bW_{b} + \sum_{\ell \in S(t-1): b \in \mathcal{B}_{\ell}} \left(  A_{\ell, b_{\ell}} - \hat{A}_{\ell, b_{\ell}} \right) \bg_{\ell} \\
&= \sum_{k \in \mathcal{K} \backslash S(t-1): b \in \mathcal{B}_k} A_{k,b} \bg_k +  \bW_{b} - \sum_{\ell \in S(t-1): b \in \mathcal{B}_{\ell}} A_{\ell} \bg_{\ell}.
\end{align}
Suppose some subcarrier $b_k$ is a singleton due to device $k$ and is used to recover the index, then the channel estimation error is calculated as
\begin{align}
A_k &=  C^{-1} \bg_k^{\dagger} \bY_{b_k} - A_{k,b_k}  \\
& = C^{-1} \bg_k^{\dagger} \left( A_{k,b_k} \bg_k + \bW_{b_k} - \sum_{\ell \in S(t-1): b_k \in \mathcal{B}_{\ell}} A_{\ell} \bg_{\ell} \right) - A_{k,b_k} \\
\label{eq:pk_last}& = C^{-1} {\bg}_k^{\dagger} {\bW}_{b_k} -  \sum_{\ell \in S(t-1): b_k \in \mathcal{B}_{\ell}}  C^{-1} A_{\ell} \bg_k^{\dagger} \bg_{\ell}. 
\end{align}

Using recursion, the estimation error of $A_{k,b}$ for $k \in S(t) \backslash S(t-1)$ is calculated as
\LL{\begin{align}
A_k=e_k+\sum_{(\ell_0,\cdots,\ell_i)\in \mathcal{P}(k),\ell_0\in S(t-1)}e_{\ell_0}(-\bg^{\dagger}_k\bg_{\ell_{i}}/C)(-\bg^{\dagger}_{\ell_{i}}\bg_{\ell_{i-1}}/C)\cdots(-\bg^{\dagger}_{\ell_1}\bg_{\ell_0}/C),
\end{align}
where $\mathcal{P}(k)=\{(\ell_0,\cdots,\ell_i):\ell_0,\cdots,\ell_i,k\text{ is a path of devices in the error propagation graph}\}$, and $e_{\ell_0}$ is given by
\begin{align}
e_{\ell_0} & = C^{-1} {\bg}_{\ell_0}^{\dagger} {\bW}_{b_{\ell_0}}.
\end{align}
It is easy to see that every $e_{\ell_0}$ is independent of the design parameters $\{\bg_k\}$ of all the devices and is distributed according to $\mathcal{CN} (0, 2 \sigma^2 / (B C) )$.}

In Fig.~\ref{fig:err_prop}, for instance, when calculating the estimation error $A_2$ for device node $2$, we take into account device nodes $0$ and $1$ that have been recovered. \LL{In particular, $\mathcal{P}(2)=\{(0,2),(1,2),(0,1,2)\}$. For path $(0,2)$, the product term in the summation is $-e_0\bg_2^{\dagger}\bg_0/C$. For path $(1,2)$, the product term in the summation is $-e_1\bg_2^{\dagger}\bg_1/C$. For path $(0,1,2)$, the product term in the summation is $e_0\bg_2^{\dagger}\bg_1\bg_1^{\dagger}\bg_0/C^2$.} Consequently, we have
$A_2=e_2+e_0(-\bg_2^{\dagger}\bg_0/C+\bg_2^{\dagger}\bg_1\bg_1^{\dagger}\bg_0/C^2)-e_1\bg_2^{\dagger}\bg_1/C$.

Therefore, the DFT value at subcarrier $b$ is calculated as 
\begin{align}
\bY_{b} = \sum_{k \in \mathcal{K} \backslash S(t-1): b \in \mathcal{B}_k} A_{k,b} \bg_k +  \bW_{b} + \bV_{b},
\end{align}
where 
\LL{\begin{equation}\label{eq:Vb}
\mathbf{V}_b=-\sum_{(\ell_0,\cdots,\ell_i)\in \mathcal{P}'(b),\ell_0\in S(t-1)}e_{\ell_0}(-\bg^{\dagger}_{\ell_i}\bg_{\ell_{i-1}}/C)\cdots(-\bg^{\dagger}_{\ell_1}\bg_{\ell_0}/C)\bg_{\ell_i}
\end{equation}
where $\mathcal{P}'(b)=\{(\ell_0,\cdots,\ell_i):\ell_0,\cdots,\ell_i\text{ is a path of devices to subcarrier $b$, i.e., device $\ell_i$ points to subcarrier $b$, }\\
\text{in the error propagation graph}\}$.}

\LL{Suppose $G\in\mathcal{G}$, then $|\mathcal{P}'(b)| \leq 2$.} Moreover, by Proposition~\ref{claim:step1}, the number of left nodes in each component is less than \LL{$\alpha T\log (\beta_0K) / (T-1)$}, which indicates that \LL{$|S(t-1)|\leq \alpha T\log (\beta_0K) / (T-1)\leq \alpha T\log K/(T-1)+2\alpha\log\beta_0$}. Since each entry of the design parameter $\bg_{\ell}$ is i.i.d. Rademacher variable and $\bg_{\ell}\in\mathbb{R}^C$, $|-\bg^{\dagger}_{\ell_i}\bg_{\ell_{i-1}}/C|\leq 1$. 
\LL{Thus, for each entry of $\bV_b$, $e_{\ell_0}$ has the coefficient satisfying $\vert(-\bg^{\dagger}_{\ell_i}\bg_{\ell_{i-1}}/C)\cdots(-\bg^{\dagger}_{\ell_1}\bg_{\ell_0}/C)g_{\ell_i}^c\vert\leq 1$. Combining the fact that $e_{\ell_0}\sim\mathcal{CN}(0,2\sigma^2/(BC))$, each entry of $\bV_b$ is a Gaussian variable with zero mean and variance bounded by $4|S(t-1)|\cdot\frac{2\sigma^2}{BC}$. Since \begin{align}
\frac{8|S(t-1)|}{C}\leq&\frac{8\alpha T\log K/(T-1)+16\alpha\log\beta_0}{\log N+\log N/R+\beta_1\log K}\\
\leq&\frac{8\alpha T/(T-1)+16\alpha\log\beta_0}{1+1/R+\beta_1}\label{eq:beta2}
\end{align}
according to \eqref{eq:C0}-\eqref{eq:para_end}, the variance is bounded by $\frac{\beta_2\sigma^2}{B}$, where $\beta_2$ equals to the right hand side of \eqref{eq:beta2} and is a constant.}



\subsection{Proof of Proposition~\ref{claim:step3}}
\label{sec:pf_step3}


As described in the proof of Proposition~\ref{claim:step2}, the frequency domain signal in subcarrier-$b$ can be written as \eqref{eq:single_Y}.
The detection error depends on $\bV_b$ and hence on the number of devices that cause interference. 
Let 
\begin{align}
\bZ_b = \bW_b + \bV_b
\end{align}
denote the interference plus noise. We set the energy thresholds $\eta$.

\subsubsection{Zeroton Error Detection} Suppose subcarrier $b$ is a zeroton. The DFT values in $\bY_b$ are composed of purely noise and interference, i.e., $\bY_b = \bZ_b$. The zeroton error $E_{0}$ occurs only if $|| \dot{\bY}_b||$ is greater than the threshold $\eta$. 
Let $\bgg$ denote the set of design parameters $\bg_k$ of all the previously identified devices. Conditioned on $\bgg$, each entry of $\bZ_b$ is distributed according to $\mathcal{CN}(0, 2 \sigma_z^2)$, where $\sigma_z \leq (1 + \LL{\beta_2}/2) \sigma^2 / B$. The probability of error satisfies
\begin{align}
\mathsf{P} \left\{ E_{0} \right\} &= \mathsf{E} \left\{  
\mathsf{P} \left\{ || \dot{\bZ}_{b} ||^2 \geq \eta \big| \bgg \right\}  \right\}\label{eq:prob_cond_g_step1} \\
\label{eq:prob_cond_g_step2} & \leq  \mathsf{E} \left\{  2 C_2   \mathsf{P} \left\{ | \text{Re} \{\dot{Z}_{b,c}\} |^2 \geq \frac{\eta}{2 C_2} \big| \bgg  \right\}  \right\}\\
& \leq 4 C_2 e^{- \frac{\eta}{4 \sigma_z^2 C_2} } \\
\label{eq:prob_cond_g}& \leq 4 C_2 e^{- \frac{B \eta}{(4 + 2 \LL{\beta_2}) \sigma^2 C_2} },
\end{align}
where \eqref{eq:prob_cond_g_step2} is due to that the real and imaginary elements of $\dot{\bZ}_b$ are i.i.d. Gaussian variables, and moreover if each element has an absolute value bounded by $\eta / (2 C_2)$, then the norm of $\dot{\bZ}_b$ must be bounded by $\eta$.

With $B$ and $C_2$ chosen according to \eqref{eq:para_B} and \eqref{eq:para_end}, respectively, if we pick \LL{$\eta$ as a constant that satisfies
\begin{equation}\label{eq:eta}
\eta\geq\frac{(4+2\LL{\beta_2})\sigma^2\beta_1\lceil\log K\rceil\ln(4K^2\beta_1\lceil  \log K\rceil)}{\beta_0 K},
\end{equation}}then the error probability satisfies
\begin{align}
\label{eq:prob_e0} \mathsf{P} \{ E_{0} \}   &\leq  K^{-2}.
\end{align}
\LL{Note that the right hand side of \eqref{eq:eta} vanishes as $K\rightarrow\infty$. Hence, there exists such a constant $\eta$.}

\subsubsection{Singleton Error Detection} Suppose subcarrier $b$ is a singleton due to device $k$. Let $E_{1}$ denote the subcarrier detection error. A singleton detection error occurs due to one or more of three events: (1) $E_{1,0} = \left\{ || \dot{\bY}_b ||_2^2 < \eta \right\}$; (2) Not $E_{1,0}$ and $E_{1,1} = \left\{  \hat{\tilde{\bg}}_k \neq \tilde{\bg}_k  \right\}$; (3) $E_{1,2} = \left\{ || \dot{\bY}_b - \dot{A}_{k,b} \dot{\bg}_{k}||_2^2 > \eta \right\}$. Thus $E_{1} \subseteq E_{1,0} \cup E_{1,1} \cup E_{1,2}$.


\LL{We prove
\begin{align}
&\mathsf{P} \left\{ E_{1,0} \right\} \leq 2 K^{-2},\label{eq:prob_e10}\\
&\mathsf{P}\left\{ E_{1,1} \right\} \leq K^{-2},\label{eq:prob_e11}\\
&\mathsf{P}\{ E_{1,2} \} \leq K^{-2}.\label{eq:prob_e12}
\end{align}
in Appendices~\ref{append:proof_singleton0},~\ref{append:proof_singleton1}, and ~\ref{append:proof_singleton2}, respectively.}


We thus conclude 
\begin{align}\label{eq:prob_e1}
\mathsf{P} \left\{ E_{1} \right\} \leq 4 K^{-2}.
\end{align}

\subsubsection{Multiton Error Detection} It is proved in Appendix~\ref{append:proof_multiton} that 
\begin{align}\label{eq:prob_e2}
\mathsf{P} \left\{ E_{2} \right\} \leq 2 K^{-2}.
\end{align}
Therefore, combining \eqref{eq:prob_e0}, \eqref{eq:prob_e1} and \eqref{eq:prob_e2}, we conclude that the robust subcarrier detection makes the correct decision with probability higher than $1- 7  K^{-2}$.

We have \DG{thus} established Propositions~\ref{claim:step1}-\ref{claim:step3}.

\subsection{Complexity}

We perform $B$-point DFT for $C$ symbols. The computational complexity is \Lina{$O( K (\log K) (\log N + \log S + \log K) )$} using FFT operations. In robust subcarrier detection implemented after initialization of Algorithm~\ref{algo:suuc_cancel}, for each subcarrier, the phase estimation involves $O(\log N)$ operations, the device identification and message decoding involves $O(\log N + \log S)$ operations, \Lina{and the singleton detection and verification involves $O(\log K)$ operations}. The \textit{while} loop of Algorithm~\ref{algo:suuc_cancel} will be implemented no more than $K$ times. In each \textit{while} loop, delay estimation is neglected when $M=0$. For every subcarrier $b' \in \mathcal{S}$, where $|\mathcal{S}| \leq T$, the cancellation of signal of the detected device involves \Lina{$O(\log N + \log S + \log K)$} operations. The robust subcarrier detection for subcarrier $b'$ invloves \Lina{$O(\log N + \log S + \log K)$} operations. Since $T$ is a constant, the \textit{while} loop involves no more than \Lina{$O(K (\log N + \log S + \log K))$} operations. As a result, the complexity of DFT dominates, leading to the total computational complexity of \Lina{$O \left( K (\log K) (\log N + \log S + \log K) \right)$}. 



\DG{Hence the proof of} Theorem~\ref{theorem:noisy_discovery_sync}.

\section{Proof of Theorem \ref{theorem:noisy_discovery_asyn} (the Asynchronous Case)}
\label{sec:proof_asyn}
\XC{
In the asynchronous massive access case, we choose the parameters according to \eqref{eq:para_start}--\eqref{eq:para_end}. The last subframe in Fig.~\ref{fig:frame} consists of \Lina{$C_3 = M+\lceil\beta_3 K \log (KM+1)\rceil $} samples, where $\LL{\beta_3} \geq 64 \bar{a}^2 / a^2$. 
The total codelength in transmitted symbols is thus 
\Lina{\begin{align}
L  =& (B + M) C+ C_3 \\
= &(\beta_0K+M)\big((1+\lceil 1/R\rceil)\lceil\log N\rceil+\lceil 1/R\rceil\lceil\log S\rceil + \beta_1\lceil\log K\rceil\big)+M+ \lceil\beta_3K\log (KM+1)\rceil . 
\end{align}
}\DG{Thus the codelength satisfies~\eqref{eq:Lasync} for some $\alpha_0>0$.}
The FFT operation and channel estimation involve the same number of operations as the synchronous case. Different from the synchronous case, each device needs to estimate its delay once. The complexity of delay estimation is $O(M (M + K \log (KM+1))$ (corresponding to $M$ times auto-correlations). A total of $K$ devices need to estimate their delays. The total computational complexity is thus \Lina{$O \left( K (\log K) (\log N + \log S + \log K) \right) + O \left( K M ^2   + K^2 M \log (KM+1) \right)$}.

The following lemma shows that for each device the delay estimate is correct with high probability.
}
\begin{lemma}\label{lemma:delay_est}
Suppose the conditions specified in Theorem~\ref{theorem:noisy_discovery_asyn} hold. Suppose the bipartite graph $G \in \mathcal{G}$ and the parameters are chosen according to \eqref{eq:para_start}--\eqref{eq:para_end}. Suppose $\LL{\beta_3} \geq 64 \bar{a}^2 / a^2$ and is chosen to be an integer, and $C_3 = M + \Lina{\lceil\beta_3 K \log (KM+1)\rceil} $ samples are used for delay estimation, the delay of a device estimated according to~\eqref{eq:est_delay} is correct with probability no less than $1-4/K^2$.
\end{lemma}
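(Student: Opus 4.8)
The plan is to show that the matched-filter statistic $\mathcal{T}(m)$ defined in \eqref{eq:metric} exhibits a sharp peak at the true delay $m=m_k$ that, with overwhelming probability, strictly exceeds its magnitude at every other candidate shift, so that the maximizer in \eqref{eq:est_delay} returns $m_k$. First I would insert the received-signal model \eqref{eq:system_model} into \eqref{eq:metric} and split $\mathcal{T}(m)$, over the synchronization window $\mathcal{I}$ of \eqref{eq:setI}, into a \emph{self term} $a_k \sum_{i \in \mathcal{I}} s_{k,i+m-m_k}\, s_{k,i}^{\ast}$, a \emph{multiuser interference term} $\sum_{k' \in \mathcal{K} \setminus \{k\}} a_{k'} \sum_{i \in \mathcal{I}} s_{k',i+m-m_{k'}}\, s_{k,i}^{\ast}$, and a \emph{noise term} $\sum_{i \in \mathcal{I}} w_{i+m}\, s_{k,i}^{\ast}$. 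Since the synchronization pilots are designed so that the time-domain samples are i.i.d.\ circularly symmetric complex Gaussian and independent across devices, at $m=m_k$ the self term collapses to $a_k \sum_{i \in \mathcal{I}} |s_{k,i}|^2$, which concentrates around the deterministic peak $|a_k|^2 B C_3$ noted after \eqref{eq:metric}, whereas for every $m \neq m_k$ the self term, the interference term, and the noise term are all zero-mean.

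Second, I would control the fluctuations of the three contributions. Conditioned on $\bs_k$, the noise term is a zero-mean complex Gaussian of variance $2\sigma^2 \sum_{i\in\mathcal{I}}|s_{k,i}|^2 = \Theta(\sigma^2 B C_3)$ and is handled by a Gaussian tail bound. The off-peak self term and each per-interferer cross-correlation are bilinear forms in two independent Gaussian pilot sequences, i.e.\ sums of $|\mathcal{I}| = B C_3$ products of independent Gaussians, hence sub-exponential with standard deviation $\Theta(\sqrt{B C_3})$ per interferer (with the channel and pilot constants absorbed); summing the $K-1$ independent interferers and using $|a_{k'}| \leq \bar{a}$ gives an interference term of standard deviation $\Theta(\sqrt{K B C_3})$. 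Applying the Hanson--Wright inequality (Theorem~\ref{theorem:hanson_wright}) to these quadratic and bilinear Gaussian forms, a deviation of order half the peak occurs with probability $\exp(-\Theta(C_3))$; the crucial cancellation is that, with $B = \beta_0 K$ from \eqref{eq:para_B}, the peak-to-interference-standard-deviation ratio is $\Theta\!\left(\sqrt{B C_3 / K}\right) = \Theta(\sqrt{C_3})$ independently of the pilot power, so that its square---the exponent---is of order $C_3$.

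Third, I would assemble the pieces. Off an event of probability $\exp(-\Theta(C_3))$, the concentration of $\sum_{i\in\mathcal{I}}|s_{k,i}|^2$ together with the interference and noise bounds gives $|\mathcal{T}(m_k)| \geq \tfrac{1}{2}|a_k|^2 B C_3$, while for each fixed $m \neq m_k$ the same bounds give $|\mathcal{T}(m)| < \tfrac{1}{2}|a_k|^2 B C_3$ off a further $\exp(-\Theta(C_3))$ event. A union bound over the at most $M+1$ candidate shifts yields $\mathsf{P}\{\hat m_k \neq m_k\} \leq (M+1)\exp(-\Theta(C_3))$. Choosing $C_3 = \beta_4 \lceil \log(K+M) \rceil$ converts this into $(M+1)(K+M)^{-\Theta(\beta_4)} \leq (K+M)^{1-\Theta(\beta_4)} \leq K^{1-\Theta(\beta_4)}$, so taking $\beta_4$ large enough that $\Theta(\beta_4) \geq 3$ makes the delay-estimation failure probability $O(1/K^2)$, which is the assertion of the lemma.

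The main obstacle will be the concentration of the interference term, which is a bilinear form in the $\Theta(K B C_3)$ independent Gaussian pilot samples of all active devices rather than a sum of bounded or sub-Gaussian increments; obtaining the right exponent---of order $C_3$ uniformly over all shifts, and in particular showing that the interference variance grows only linearly in $K$ even though it aggregates every other active device---is exactly where the Hanson--Wright machinery and the sizing $B = \Theta(K)$ are indispensable. A secondary, lower-order technicality is the boundary effect across the $C_3$ pilot symbols: a shift by $m \leq M$ draws in $O(M C_3)$ samples from outside the nominal alignment, but these are again zero-mean pseudonoise cross terms, negligible against the $\Theta(B C_3)$ peak, and can be absorbed into the same concentration estimates.
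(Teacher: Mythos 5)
Your proposal follows essentially the same route as the paper's proof: split $\mathcal{T}(m)$ into the peak term of size $\Theta(|a_k| B C_3)$ plus zero-mean interference/noise fluctuations of standard deviation $\Theta(\sqrt{K B C_3})$, threshold at half the peak, union bound over the $M$ wrong shifts, and exploit $B = \beta_0 K$ so the exponent is $\Theta(C_3) = \Theta(\beta_4 \log(K+M))$, then take $\beta_4$ large. The only difference is cosmetic and in your favor: where the paper invokes the central limit theorem to treat $\mathcal{T}(m)$ as Gaussian and applies Gaussian tail bounds, you obtain the same tails rigorously via Hanson--Wright/sub-exponential concentration for the Gaussian bilinear forms (and you also note the negligible window-boundary effect that the paper silently ignores).
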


The proof of Lemma~\ref{lemma:delay_est} is relegated to Appendix~\ref{append:proof_delay_est}.

Denote by $\mathcal{V}$ the event that the delay estimation of all devices is correct. Then we have $\mathsf{P} \{\mathcal{V} | G \in \mathcal{G} \} \geq 1 - 4/K$ by the union bound and Lemma~\ref{lemma:delay_est}. Conditioned on that the device delays are correctly detected, the residual errors of channel estimation can be characterized in the same manner as the synchronous case. Thus, the error probability can be upper bounded by 
\begin{align}\label{eq:err_prob}
\LL{P_s\leq \mathsf{P} \{ \mathcal{E} | G \in \mathcal{G}\} + \mathsf{P} \{ G \notin \mathcal{G}|G\in\hat{\mathcal{G}}  \}+\mathsf{P}\{G\notin\hat{\mathcal{G}}\},}
\end{align}
where
\begin{align}
     \mathsf{P} \{ \mathcal{E} | G \in \mathcal{G} \} &\leq 
     \mathsf{P} \{\bar{\mathcal{V}} | G \in \mathcal{G} \} + \mathsf{P} \{ \mathcal{E} |\mathcal{V}, G \in \mathcal{G} \}  \mathsf{P} \left\{ \mathcal{V} | G \in \mathcal{G} \right\} \\
     & \leq \frac{4}{K} + \mathsf{P} \{ \mathcal{E} |\mathcal{V}, G \in \mathcal{G} \} \\
     & \leq \frac{4}{K} + 1 - \left( 1 - \frac{7}{K^2} \right)^{KT} \label{eq:app_prop4}\\
     & \leq \frac{4}{K} + \frac{7 T}{K},\label{eq:theo2_epsilon}
\end{align}
where \eqref{eq:app_prop4} is according to Proposition~\ref{claim:step3}. \LL{We thus have
\begin{align}
 P_s \leq \frac{4 + 7 T + \nu +\zeta}{K}
\end{align}
according to Proposition~\ref{claim:step1}, \eqref{eq:err_prob}, and \eqref{eq:theo2_epsilon}.}

Theorem~\ref{theorem:noisy_discovery_asyn} is proved with $K_0 = \max\{(4 + 7 T + \LL{\nu+\zeta}) / \epsilon,K_0'\}$, and $\alpha_1$ being some constant due to the FFT operation.

\section{Simulation Results}
\label{sec:sim}

\LL{Without loss of generality}, we focus on the performance of device identification via sparse OFDMA throughout all simulations. Specifically, the total number of devices is $N = 2^{38}$, \LL{which is over 274 billion. (Alternatively, one can have one million devices where each active device can transmit $\log(274000)\approx 18$ bits)}. The performance of sparse OFDMA will be compared with two random acces schemes, namely slotted ALOHA and CSMA. Throughout the simulation, the channel coefficient amplitude is uniformly randomly generated from $[1, 2]$ and the phase is uniformly randomly generated from $[0, 2 \pi]$. The received SNR is defined as $\text{SNR} = 10\log( 1/ (2 \sigma^2)  )$, where $\sigma^2$ is the noise variance per dimension.

\subsection{Synchronous device identification}
\begin{figure}
\centering
  \includegraphics[width=10cm]{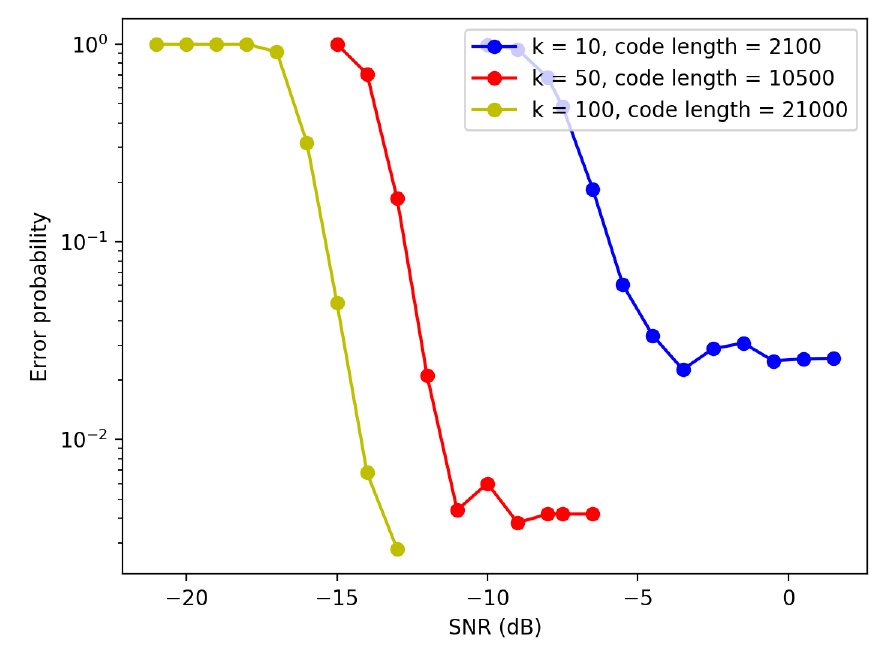}\\
  \caption{Error probability of device identification in the case of synchronous transmission, where $N = 2^{38}$.}\label{fig:sync_supp}
\end{figure}

\begin{figure}
\centering
  \includegraphics[width=10cm]{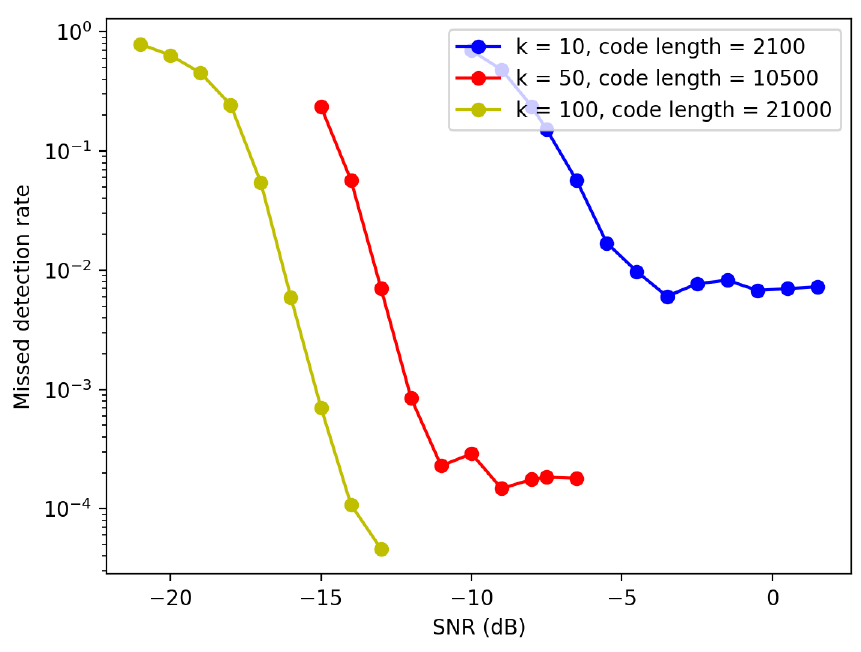}\\
  \caption{Rate of missed detection in the case of synchronous transmission, where is $N = 2^{38}$.}\label{fig:sync_md}
\end{figure}


We first investigate the error probability of synchronous device identification via sparse OFDMA. We choose the parameters as $B = \lceil 2.5K \rceil$, $C_0 = C_2 = 4$. The code rate $R=0.5$ is used to encode the device index, and thus $C_1 = 2  \lceil  \log N \rceil$. The number of subcarriers assigned to each device is $T = 3$.

Fig.~\ref{fig:sync_supp} shows the error probability of synchronous device identification. In each simulation, if there exists missed detection or false alarm, an error is registered. Throughout the simulation, the false positive rates is smaller than $10^{-5}$ for the SNRs in the region of interest, and thus we do not plot the results here. 
 Fig.~\ref{fig:sync_md} shows the miss rate, defined as the average number of misses normalized by the number of active devices $K$. Simulation shows that with a code length of $10500$, sparse OFDM can achieve a miss detection rate lower than $10^{-3}$ at SNR=$-10$ dB. In the case of a 20 MHz channel bandwidth, the transmission time is approximately 0.5 ms. 

\subsection{Asynchronous device identification}

\begin{figure}
\centering
  \includegraphics[width=10cm]{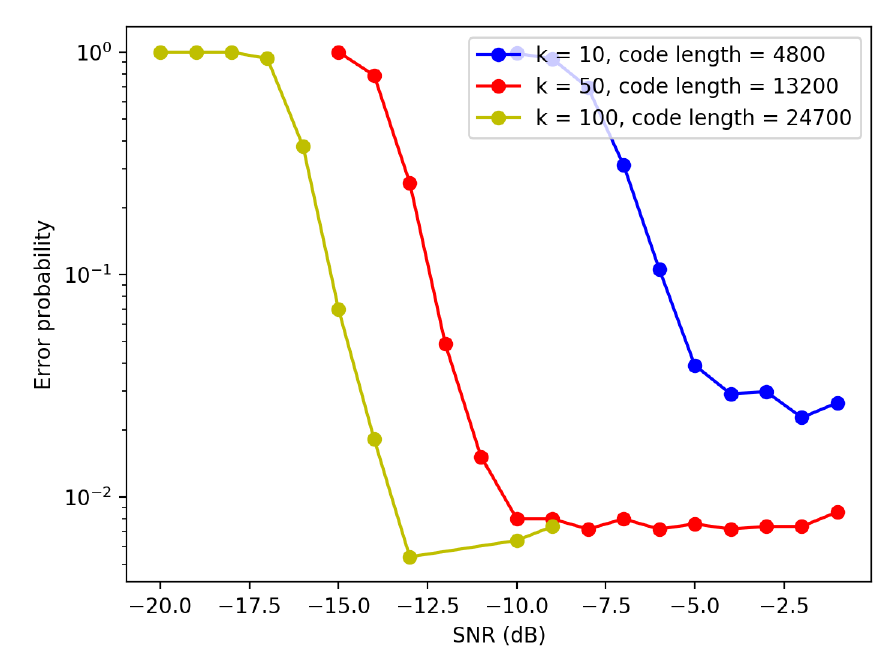}\\
  \caption{Error probability of device identification in the case of discrete delay. The device population is $N = 2^{38}$ and the maximum delay is $M= 20$.}\label{fig:discrete_supp}
\end{figure}

\begin{figure}
\centering
  \includegraphics[width=10cm]{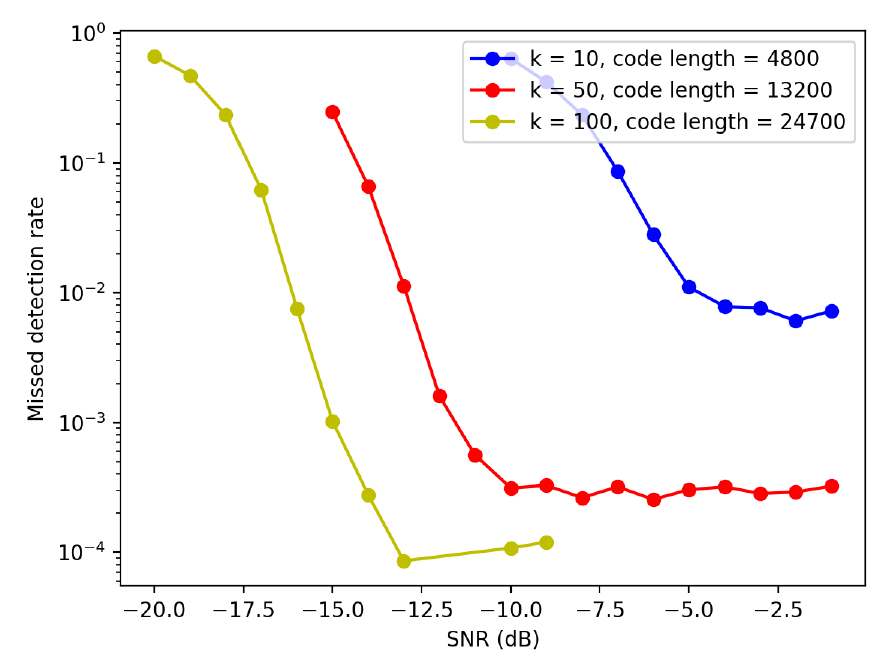}\\
  \caption{Rate of missed detection in the case of discrete delay. The device population is $N = 2^{38}$ and the maximum delay is $M= 20$.}\label{fig:discrete_md}
\end{figure}


We next investigate the error probability of asynchronous device identification. We choose the parameters as $B = \lceil 2.5 K \rceil$, $C_0 = C_2 = 4$. The code rate $R=0.5$ is used to encode the device index, and thus $C_1 = 2  \lceil  \log N \rceil$. The number of time-domain samples used for delay estimation is $C_3 = 1000$ for both cases of $K=10$ and $K=50$, and $C_3 = 2000$ for $K=100$. The device population is $N = 2^{38}$ and the maximum delay in terms of transmit samples is $M = 20$. Fig.~\ref{fig:discrete_supp} shows the error probability of asynchronous device identification. Fig.~\ref{fig:discrete_md} shows the missed detection rate (per active user), respectively. As in the synchronous setting, the error probability is low under moderate SNR, which confirms our theoretical analysis. 

\subsection{Comparison with random access}


\subsubsection{Slotted ALOHA}

First, consider slotted ALOHA, 
where every device transmits a frame with probability $p$
independently in each slot over an $N_s$-slot period.
The probability of one given neighbor being missed is equal to the probability that the device is
unsuccessful in all $N_s$ slots:
\begin{align}
\mathsf{P}_{\text{miss,aloha}} = \left(1 - (1 - p)^{K-1} p \right)^{N_s}.
\end{align}
Setting $p = 1/ K$ minimizes $\mathsf{P}_{\text{miss,aloha}}$.

\subsubsection{CSMA}

It is challenging, if not impossible, to implement CSMA-based
wireless access. Due to the power asymmetry between
devices and access points, a device may not be able to sense another
device's transmission in the same cell. Suppose, nonetheless, devices
can sense each other and CSMA is used.
When the channel is idle, the devices start their timers.
The device whose timer expires the first transmits.
When the channel becomes busy, the devices stop their timers.
Device $i$ has a chance to transmit if its timer is
the minimum in some slot.
The probability that a given device never gets a chance to transmit is
\begin{align}
\mathsf{P}_{\text{miss,csma}} = \left( 1 - \mathsf{P} \left\{ T_1 < \min\{ T_2, \cdots, T_K \} \right\} \right)^{N_s}.
\end{align}

In order to reliably transmit the device index $\log N$ bits, the number of symbols required in each frame is at least $\lceil \log N \rceil / \log(1+ \text{SNR})$.  Therefore, the total number of symbols required is $N_s \lceil \log N \rceil / \log(1+\text{SNR})$, where $N_s$ depends on the target miss rate.
Under SNR = -10 dB, it can be seen from Fig.~\ref{fig:discrete_md} that sparse OFDMA can achieve missed detection rate low than $10^{-3}$ for $K=10, 50$ or 100. We can calculate the number of symbols required by slotted ALOHA and CSMA to achieve a comparable miss detection rate of $10^{-3}$ at SNR = -10 dB. When $K=50$, the code length of sparse OFDMA is around 10,500, while slotted ALOHA and CSMA requires more than 90,000 symbols to achieve a missed detection rate of $10^{-3}$. Sparse OFDMA can effectively reduce the code length by over 80\%. Moreover, the code length reduction is even greater for larger $K$ and a lower error probability requirement. When $K=100$, the code length of sparse OFDMA is around 21,000, while slotted ALOHA and CSMA requires more than 180,000 symbols to achieve a missed detection rate of $10^{-3}$. Sparse OFDMA can effectively reduce the code length by over 85\%. The significant reduction of the code length required by sparse OFDM is due to the code design, which utilizes the sparsity of the active users among the total number of users.

\section{Conclusion}
\label{sec:conclude}

We have proposed a low-complexity asynchronous neighbor discovery and massive access scheme for very large networks with billions of nodes. The scheme may find important applications in the Internet of Things. The scheme, referred to as sparse OFDMA, applies the recently developed sparse Fourier transform to compressed device identification. Compared with random access schemes, sparse OFDMA requires much shorter code length by exploiting the multiaccess nature of the channel and the multiuser detection gain. Sparse OFDMA is a divide-and-conquer approach of low complexity, where only point-to-point capacity approaching codes are adopted. It provides practical physical layer capability for multipacket reception.

\appendices

\section{Auxiliary Results on Sub-Gaussian Variables}
\label{append:conc_ineq}

We introduce the definition of sub-Gaussian variables, which will be used in the proof of the main theorems.

\begin{definition}\label{def:subgauss}
$X$ is $\sigma$-subGaussian if there exists $\sigma >0$ such that
\begin{equation}
\mathsf{E} \left\{ \exp(tX) \right\}	 \leq \exp (\sigma^2 t^2/2), \quad \forall t \geq 0.
\end{equation}
\end{definition}

\begin{definition}[subGaussian norm]\label{def:subgaussian_norm}
The subGaussian norm of the random variable $X$ is defined as
\begin{align}\label{eq:subgauss_norm}
||X||_{\phi_2} = \sup_{p \geq 1} p^{-1/2} \left( \mathsf{E} |X|^p \right)^{1/p}.
\end{align}
\end{definition}

\LL{The following three lemmas, which are established in~\cite{rivasplata2012subgaussian}, will be used in the proof.}

\begin{lemma}\label{lemma:subgaussian_scale}
Suppose $X$ is $\sigma$-subGaussian, then $aX$ is $a \sigma$-subGaussian.
\end{lemma}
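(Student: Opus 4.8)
The plan is to reduce the claim to a one-line change of variables in the moment generating function bound that defines subGaussianity (Definition~\ref{def:subgauss}). By hypothesis, $\mathsf{E}\{\exp(tX)\} \leq \exp(\sigma^2 t^2/2)$ for all $t \geq 0$, and I want the analogous bound $\mathsf{E}\{\exp(t(aX))\} \leq \exp((a\sigma)^2 t^2/2)$ for the scaled variable, which is exactly the statement that $aX$ is $a\sigma$-subGaussian.

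First I would treat the case $a \geq 0$, which is the substantive content. For any $t \geq 0$, set $s = at \geq 0$ and compute
\begin{align}
\mathsf{E}\{\exp(t(aX))\} = \mathsf{E}\{\exp(sX)\} \leq \exp(\sigma^2 s^2/2) = \exp(\sigma^2 a^2 t^2/2) = \exp((a\sigma)^2 t^2/2),
\end{align}
where the single inequality is precisely the subGaussian hypothesis for $X$ applied at the nonnegative argument $s$. Reading off the exponent, the parameter of $aX$ is $a\sigma$, as claimed. No estimate beyond the definition is required.

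The only point that deserves care — and hence the main obstacle, modest as it is — is the sign of the scalar. The definition adopted in the paper controls the MGF only for $t \geq 0$, so when $a < 0$ the substitution $s = at$ yields a negative argument not covered directly by the hypothesis. Since a subGaussian parameter is by convention positive, the parameter $a\sigma$ should be read as $|a|\sigma$, and I would dispose of the negative case by writing $aX = |a|\,(\mathrm{sgn}(a)\,X)$ and invoking symmetry: in the intended application the scaled variables are the Rademacher variables $\dot g^c_k$, whose laws are symmetric, so their moment generating functions are even and the bound extends to all $t \in \mathbb{R}$, making $\mathrm{sgn}(a)\,X$ $\sigma$-subGaussian whenever $X$ is. This reduces the negative case to the positive one handled above. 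Thus the lemma is essentially immediate from the definition, with the sign bookkeeping being the only nontrivial bit.
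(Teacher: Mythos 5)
Your proposal is correct. Note that the paper itself offers no proof of this lemma at all: it appears in the appendix with the remark that it is ``established in'' the cited reference on subGaussian variables, so there is no in-paper argument to compare against. Your one-line substitution $s = at$ in the moment generating function bound is exactly the standard proof and is what the cited reference does.

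Your observation about the sign of $a$ is a genuine refinement rather than mere bookkeeping. The paper's Definition~\ref{def:subgauss} controls the MGF only for $t \geq 0$, and under that one-sided definition the lemma is actually false for negative $a$ in general (take $X = -Y$ with $Y \geq 0$ heavy-tailed: then $\mathsf{E}\{\exp(tX)\} \leq 1$ for all $t \geq 0$, so $X$ is $\sigma$-subGaussian for every $\sigma$, yet $-X = Y$ has an infinite MGF and is not subGaussian). Since the paper invokes the lemma with coefficients $A_{k,R}$ that can be negative, your resolution --- reading the parameter as $|a|\sigma$ and using the symmetry of the Rademacher variables $\dot g^c_k$ to extend the MGF bound to all real arguments --- patches a gap that the paper glosses over. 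The only alternative fix would be to replace the definition by the two-sided one (all $t \in \mathbb{R}$), under which your substitution argument goes through verbatim for arbitrary real $a$ with parameter $|a|\sigma$.
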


\begin{lemma}\label{lemma:subgaussian_sum}
Suppose $X_1$ is $\sigma_1$-subGaussian, $X_2$ is $\sigma_2$-subGaussian. Moreover, they are independent. Then $X_1+X_2$ is $\sqrt{\sigma_1^2 + \sigma_2^2}$-subGaussian.
\end{lemma}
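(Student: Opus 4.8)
The plan is to work directly with the moment generating function, since the hypothesis ``$\sigma$-subGaussian'' is stated in Definition~\ref{def:subgauss} purely as a bound on $\mathsf{E}\{\exp(tX)\}$ for $t \geq 0$. The single ingredient that makes the sum behave well is independence, which lets the MGF of $X_1 + X_2$ factor into the product of the individual MGFs.

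Concretely, first I would fix an arbitrary $t \geq 0$ and write
\begin{align}
\mathsf{E}\left\{ \exp\left( t(X_1 + X_2) \right) \right\} = \mathsf{E}\left\{ \exp(tX_1) \exp(tX_2) \right\} = \mathsf{E}\left\{ \exp(tX_1) \right\} \, \mathsf{E}\left\{ \exp(tX_2) \right\},
\end{align}
where the last equality is the only place the independence assumption is used. Next I would apply the subGaussian bound from Definition~\ref{def:subgauss} to each factor, namely $\mathsf{E}\{\exp(tX_1)\} \leq \exp(\sigma_1^2 t^2 / 2)$ and $\mathsf{E}\{\exp(tX_2)\} \leq \exp(\sigma_2^2 t^2 / 2)$, both valid for $t \geq 0$. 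Multiplying these two bounds and combining the exponents gives
\begin{align}
\mathsf{E}\left\{ \exp\left( t(X_1 + X_2) \right) \right\} \leq \exp\left( \frac{\sigma_1^2 t^2}{2} \right) \exp\left( \frac{\sigma_2^2 t^2}{2} \right) = \exp\left( \frac{(\sigma_1^2 + \sigma_2^2) t^2}{2} \right).
\end{align}
Since this holds for every $t \geq 0$, Definition~\ref{def:subgauss} is satisfied with parameter $\sqrt{\sigma_1^2 + \sigma_2^2}$, which is exactly the claim.

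There is essentially no hard step here; the result is a one-line consequence of independence and the exponential-moment definition. The only point requiring a moment's care is that Definition~\ref{def:subgauss} only asks for the bound on the range $t \geq 0$, so I would restrict the entire argument to $t \geq 0$ and need not worry about two-sided control of the MGF. I would also note, for correctness of the factorization, that independence of $X_1$ and $X_2$ implies independence of $\exp(tX_1)$ and $\exp(tX_2)$, so that the expectation of the product equals the product of the expectations.
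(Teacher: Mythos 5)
Your proof is correct. The paper does not actually prove this lemma---it simply cites \cite{rivasplata2012subgaussian}---and your MGF-factorization argument (independence gives the product of MGFs, then apply Definition~\ref{def:subgauss} to each factor) is exactly the standard proof found in that reference, including the correct observation that only $t \geq 0$ needs to be handled.
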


\begin{lemma}\label{lemma:rademacher_basic}
If a random variable $X$ is $\sigma$-subGaussian with zero mean, then for any $t > 0$, the following holds
\begin{align}
\mathsf{P} \{ X > t \} &\leq e^{-\frac{t^2}{2 \sigma^2}} \\
\mathsf{P} \{ X < -t \} &\leq e^{-\frac{t^2}{2 \sigma^2}}.
\end{align}
\end{lemma}

The following theorem characterizes the properties of subGaussian variables~\cite{rivasplata2012subgaussian}.
\begin{theorem}[Characterization of subGaussian variables]\label{theorem:subgaussian_property}
Let $\mathsf{E} X = 0$. The following are equivalent:
\begin{enumerate}
\item $\mathsf{E} (e^{tX}) \leq e^{ \frac{t^2}{4}}$ .
\item $\forall p \geq 1$, $\left( \mathsf{E} |X|^p \right)^{1/p} \leq \sqrt{2 p}$.
\end{enumerate}
\end{theorem}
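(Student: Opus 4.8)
The plan is to prove the equivalence by establishing the cyclic chain of implications $(1)\Rightarrow(2)\Rightarrow(3)\Rightarrow(1)$, the standard route for characterizing subGaussianity; as is customary in this literature, ``equivalent'' is understood with the normalization displayed in each item, absolute constants being adjustable. Throughout I treat the mean-zero bound in (1) as holding for all real $t$, which is what couples it to the two-sided tail statement (2) and the two-sided moment statement (3).

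For $(1)\Rightarrow(2)$ I would use the Chernoff bound. For $\lambda>0$, Markov's inequality gives $\mathsf{P}\{X>t\}\le e^{-\lambda t}\mathsf{E}(e^{\lambda X})\le e^{-\lambda t+\lambda^2/4}$, and optimizing the exponent over $\lambda$ (the minimizer is $\lambda=2t$, giving exponent $-t^2$) yields $\mathsf{P}\{X>t\}\le e^{-t^2}$. Running the identical argument for $-X$, which also satisfies (1), and combining by the union bound produces $\mathsf{P}\{|X|>t\}\le 2e^{-t^2}$, which is (2).

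For $(2)\Rightarrow(3)$ I would invoke the layer-cake identity $\mathsf{E}|X|^p=\int_0^\infty p\,t^{p-1}\mathsf{P}\{|X|>t\}\,dt$ and substitute the tail bound. A naive insertion of $2e^{-t^2}$ gives $\mathsf{E}|X|^p\le p\,\Gamma(p/2)$ after the change of variable $u=t^2$, but this is too lossy near $p=1$ to reach the sharp constant; the fix is to use $\mathsf{P}\{|X|>t\}\le\min\{1,2e^{-t^2}\}$, splitting the integral at $t_0=\sqrt{\ln 2}$ so that the small-$t$ range is controlled by $1$ rather than by the exponential. From there, verifying $(\mathsf{E}|X|^p)^{1/p}\le\sqrt{2p}$ reduces to an elementary comparison of the resulting expression with $(2p)^{p/2}$, using $p\le 2^p$.

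For $(3)\Rightarrow(1)$ I would expand the moment generating function as $\mathsf{E}(e^{tX})=\sum_{k\ge 0}t^k\mathsf{E}(X^k)/k!$. Here the mean-zero hypothesis is essential: it annihilates the $k=1$ term, leaving $1+\sum_{k\ge 2}t^k\mathsf{E}(X^k)/k!$. Bounding $|\mathsf{E}(X^k)|\le\mathsf{E}|X|^k\le(2k)^{k/2}$ from (3) and controlling the factorials by Stirling's lower bound $k!\ge(k/e)^k$ shows the tail of the series is summable and of Gaussian type. This last step is where I expect the main difficulty: the crude term-by-term estimate is not sharp enough to recover the literal constant $1/4$ in the exponent of (1) (the $k=2$ term alone already overshoots), so genuinely closing the loop requires readjusting an absolute constant, consistent with the convention that the three items characterize the same class rather than hold simultaneously with the exact constants shown. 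I would therefore present this implication with an explicit constant in the exponent of (1) and note that the only direction the paper actually invokes, namely $(1)\Rightarrow(3)$ to bound moments from the MGF hypothesis, does hold with the stated constants via the $(1)\Rightarrow(2)\Rightarrow(3)$ chain above. The honest summary is that the probabilistic content lies entirely in the Chernoff and layer-cake steps, while the crux is the constant bookkeeping in the truncated moment integral and in the moment-to-MGF passage.
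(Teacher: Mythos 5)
Your proposal cannot be compared against an in-paper argument because there is none: the paper imports this theorem from the cited reference~\cite{rivasplata2012subgaussian} with the remark that the lemmas and theorem of that appendix ``are established in'' that note, and never proves it. Judged on its own merits, your reconstruction follows the standard route, and the two implications the paper actually relies on are correct as you give them: the Chernoff argument for $(1)\Rightarrow(2)$ yields exactly $2e^{-t^2}$, and the layer-cake argument with the truncated tail $\min\{1,\,2e^{-t^2}\}$ for $(2)\Rightarrow(3)$ does close the gap at small $p$ that the naive substitution leaves open (at $p=1$ the naive bound gives $\mathsf{E}|X| \le \Gamma(1/2) = \sqrt{\pi} > \sqrt{2}$, so the truncation is genuinely needed); for $p \ge 2$ the naive bound already suffices, since $p\,\Gamma(p/2) = 2\,\Gamma(p/2+1) \le 2\,(p/2)^{p/2} \le (2p)^{p/2}$.

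Your reservation about $(3)\Rightarrow(1)$ is not merely a presentational caveat --- it identifies a genuine defect of the statement as transcribed. With the literal constants the three items are not equivalent: take $X = \pm\sqrt{2}$ with probability $1/2$ each. Then $\left(\mathsf{E}|X|^p\right)^{1/p} = \sqrt{2} \le \sqrt{2p}$ for every $p \ge 1$, so (3) holds; yet $\mathsf{E}(e^{tX}) = \cosh(\sqrt{2}\,t) = 1 + t^2 + O(t^4)$ exceeds $e^{t^2/4} = 1 + t^2/4 + O(t^4)$ for small $t$, so (1) fails, and (2) fails as well (at $t=1$, $\mathsf{P}\{|X|>1\}=1 > 2e^{-1}$). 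So the cycle can only be closed after adjusting absolute constants, exactly as you say: the theorem must be read as a characterization of the subGaussian class, not as a literal equivalence of the displayed inequalities. Since the paper only ever invokes the direction $(1)\Rightarrow(3)$ --- converting the MGF bound on $S_{c,R}\big/\sqrt{2\sum_{k} A^2_{k,R}}$ into the moment bound that feeds the subGaussian-norm hypothesis of the Hanson--Wright inequality --- your chain $(1)\Rightarrow(2)\Rightarrow(3)$, which does hold with the stated constants, covers everything the paper needs.
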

%
%

The following theorem is on the concentration of subGaussian random variables.

\begin{theorem}[Hanson-Wright inequality~\cite{rudelson2013hanson}]\label{theorem:hanson_wright}
Let $\bZ = (Z_1, \cdots, Z_n) \in \mathbb{R}^n$ be a random vector with independent components $Z_i$ which satisfy $\mathsf{E} Z_i =0$ and the subGaussian norm $||Z_i||_{\phi_2} \leq K$. Let $\bA$ be an $n \times n$ matrix. Then, for every $t \geq 0$,
\begin{align}
& \mathsf{P} \left\{ |\bZ^T \bA \bZ - \mathsf{E} \left\{ \bZ^T \bA \bZ \right\}| > t \right\} \leq  2 \exp \left[ -\LL{\varsigma} \min \left( \frac{t^2}{K^4 || \bA ||_{F}^2}, \frac{t}{K^2 ||\bA||} \right) \right],
\end{align}
where the operator norm of $\bA$ is $||\bA|| = \max_{\bx \neq 0} \frac{||\bA \bx||_2}{||\bx||_2}$, the Frobenius norm of $\bA$ is $||\bA||_F = (\sum_{i,j} |A_{i,j}|^2)^{1/2}$ and $\LL{\varsigma}$ is \LL{an absolute constant that does not depend on $K,\bA$ and $t$}.
\end{theorem}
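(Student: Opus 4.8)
The plan is to bound the moment generating function (MGF) of the centered quadratic form $X = \bZ^T \bA \bZ - \mathsf{E}\{\bZ^T \bA \bZ\}$ and then apply a Chernoff argument. By rescaling $Z_i \mapsto Z_i/K$ and absorbing the factor into $\bA$, I would first reduce to the case $K=1$; the factors $K^2$ and $K^4$ in the final bound are reinstated at the very end by undoing this scaling. I would also replace $\bA$ by its symmetric part (which changes neither the quadratic form nor either norm by more than a constant) and split $X = \sum_i A_{ii}(Z_i^2 - \mathsf{E} Z_i^2) + \sum_{i \neq j} A_{ij} Z_i Z_j =: X_{\mathrm{diag}} + X_{\mathrm{off}}$. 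It then suffices to control the MGF of each piece on the range $|\lambda| \le c_1/||\bA||$ and to show it is at most $\exp(c_2 \lambda^2 ||\bA||_F^2)$.

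The diagonal part is routine: the variables $Z_i^2 - \mathsf{E} Z_i^2$ are independent, mean zero, and subexponential with subexponential norm of order $||Z_i||_{\phi_2}^2 \le 1$, so a standard Bernstein-type MGF estimate gives $\mathsf{E}\exp(\lambda X_{\mathrm{diag}}) \le \exp(c_2 \lambda^2 \sum_i A_{ii}^2)$ for $|\lambda| \le c_1/\max_i |A_{ii}|$. Since $\sum_i A_{ii}^2 \le ||\bA||_F^2$ and $\max_i |A_{ii}| \le ||\bA||$, this is already of the desired shape.

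The off-diagonal term is where the real work lies, and I expect it to be the main obstacle. The idea is a two-step linearization. First, a decoupling inequality replaces the dependent chaos by one in which the two copies of $\bZ$ are independent: there is an absolute constant such that $\mathsf{E}\exp(\lambda X_{\mathrm{off}}) \le \mathsf{E}\exp(c_3 \lambda\, \bZ^T \bA \bZ')$, where $\bZ'$ is an independent copy of $\bZ$. Conditioning on $\bZ'$, the form $\bZ^T \bA \bZ' = \sum_i Z_i (\bA \bZ')_i$ is a linear combination of the independent unit-subGaussian $Z_i$, hence subGaussian with proxy variance $||\bA \bZ'||_2^2$, giving $\mathsf{E}_{\bZ}\exp(c_3\lambda\,\bZ^T\bA\bZ') \le \exp(c_4 \lambda^2 ||\bA \bZ'||_2^2)$. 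It remains to take $\mathsf{E}_{\bZ'}\exp(c_4\lambda^2 ||\bA\bZ'||_2^2)$, which is itself a quadratic chaos; here I would linearize a second time with a Gaussian, writing $\exp(c_4\lambda^2 ||\bA\bZ'||_2^2) = \mathsf{E}_{\bg}\exp(\sqrt{2c_4}\,\lambda\, \bg^T \bA \bZ')$ for a standard Gaussian vector $\bg$ independent of $\bZ'$. After swapping expectations the inner form is again linear in $\bZ'$, so $\mathsf{E}_{\bZ'}$ yields $\exp(c_5\lambda^2 ||\bA^T\bg||_2^2)$, and the remaining Gaussian integral over $\bg$ converges exactly when $c_5\lambda^2||\bA||^2 < 1/2$, producing $\mathsf{E}\exp(\lambda X_{\mathrm{off}}) \le \exp(c_6\lambda^2||\bA||_F^2)$ on $|\lambda| \le c_1/||\bA||$. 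Tracking the absolute constants through the decoupling step and this double Gaussian trick, and checking that the operator-norm constraint on $\lambda$ is precisely what the last Gaussian integral forces, is the delicate part.

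Finally I would combine the two pieces and invoke Markov's inequality: $\mathsf{P}\{X > t\} \le \exp(-\lambda t + c_2 \lambda^2 ||\bA||_F^2)$ for $0 \le \lambda \le c_1/||\bA||$. Optimizing over $\lambda$ gives the two-regime bound — the interior choice $\lambda = t/(2c_2||\bA||_F^2)$, when admissible, yields the subGaussian tail $\exp(-t^2/(4c_2||\bA||_F^2))$, while for larger $t$ the boundary choice $\lambda = c_1/||\bA||$ yields $\exp(-c_1 t/(2||\bA||))$; together these give $\exp(-c_0\min(t^2/||\bA||_F^2,\, t/||\bA||))$. Applying the same argument to $-X$ and taking a union bound supplies the factor $2$ and the absolute value, and undoing the initial rescaling restores the $K^4||\bA||_F^2$ and $K^2||\bA||$ denominators in the statement.
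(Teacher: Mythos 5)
The paper gives no proof of this theorem: it is stated as an imported auxiliary result, cited directly from Rudelson and Vershynin, and used as a black box in the appendices (e.g., in the proof of Lemma~\ref{lemma:concentrate}). Your proposal is a correct sketch-level reconstruction of exactly the cited reference's argument---rescaling to $K=1$, diagonal/off-diagonal splitting with a Bernstein-type MGF bound on the sub-exponential diagonal part, decoupling followed by the double Gaussian-linearization comparison for the off-diagonal chaos (with the operator-norm constraint on $\lambda$ arising from convergence of the final Gaussian integral), and a constrained Chernoff optimization producing the two-regime tail---so it takes essentially the same approach as the paper's source rather than a different route.
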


%

\section{\LL{Proof of Singleton Error $E_{1,0}$ \eqref{eq:prob_e10}}}
\label{append:proof_singleton0}

We first establish the following lemma, which is key to the proof of \eqref{eq:prob_e10}.

\begin{lemma}\label{lemma:concentrate}
Let $B$ be given by \eqref{eq:para_B}.
Let $0 \leq c < C_2$ and $\LL{\beta_2} \geq 0$ be some fixed constants. Let each entry of $\bZ \in \mathbb{C}^{C_2}$ be distributed according to $\mathcal{CN}(0, 2 \sigma_z^2)$, where $\sigma_z \leq (1 + \LL{\beta_2}/2) \sigma^2 / B$. Let $\bQ = \bU \bLambda \bU^{\dagger}$, where $\bU$ is a real-valued orthogonal matrix and $\bLambda = \text{diag} \{0, \cdots ,0, 1, \cdots, 1 \}$ is a diagonal matrix with $c$ zeros and $C_2 - c$ ones. Let 
\begin{align}\label{eq:S}
\bS = \sum_{k=1}^{k_0} A_{k,b} \dot{\bg}_k,
\end{align} 
where $k_0 \geq 1$,  the entries of $\dot{\bg}_k$ are i.i.d. Rademacher variables, and $|A_{k,b}| \geq a$. Then there exists some $\beta_1$ such that the following holds for large enough $K$:
\begin{align}
\mathsf{P} \left\{ || \bQ \bS + \bZ||_2^2 \leq \eta \right\} \leq \frac{2}{K^2},\label{eq:lemma6}
\end{align}
where $C_2$ depends on $\beta_1$ according to  \eqref{eq:para_end} and $\eta$ is the energy threshold.
\end{lemma}

\begin{proof}
In the following, we use $\sum_{k} $ as a shorthand for $\sum_{k=1}^{k_0} $. We have
\begin{align}
\nonumber  \mathsf{P} \left\{ || \bQ \bS + \bZ||_2^2 \leq \eta \right\}  \leq &  \mathsf{P} \left\{ || \bQ \bS + \bZ||_2^2 \leq  \eta \bigg| ||\bQ \bS||_2^2 \geq \vartheta \right\}+ \\
&  \mathsf{P} \left\{ ||\bQ \bS||_2^2 \leq \vartheta \right\},
\label{eq:ub_multi_fa}
\end{align}
\LL{where the threshold $\vartheta$ is defined as 
\begin{align}
    \vartheta=\frac12 (C_2-c) \sum_{k} |A_{k,b}|^2.
\end{align}}
\DG{(Note $(C_2-c)\sum_k|A_{k,b}|^2$ is the expectation of $\Vert\bQ\bS\Vert_2^2$, hence the choice of threshold as a half of the expectation.)}


\LL{To bound the first term on the right hand side of~\eqref{eq:ub_multi_fa}, we use the triangular inequality $|| \bZ ||_2 + || \bQ \bS + \bZ ||_2 \geq || \bQ \bS ||_2 $ to obtain}
\begin{align}
\nonumber& \mathsf{P} \left\{ || \bQ \bS + \bZ ||_2^2 \leq  \eta \bigg| ||\bQ \bS||_2^2 \geq \vartheta \right\} \\
& \leq \mathsf{P} \left\{ || \bQ \bS ||_2 -  || \bZ ||_2 \leq  \sqrt{\eta} \bigg| || \bQ \bS ||_2^2 \geq \vartheta \right\}.
\end{align}
\Lina{For large enough $K$, $\eta$ can be chosen to be an arbitrarily small constant to satisfy \eqref{eq:eta}. Recall $C_2=\lceil\beta_1\log K\rceil$ as given by \eqref{eq:para_end}, for any fixed $c$, we have $\vartheta \geq \frac12 (C_2-c) a^2$, which is greater than $ 4 \eta$ for large enough $K$.} Therefore, for large enough $K$,
\begin{align}
\nonumber&\mathsf{P} \left\{ || \bQ \bS ||_2 -  || \bZ ||_2 \leq  \sqrt{\eta} \bigg| || \bQ \bS||_2^2 \geq \vartheta \right\} \\
& \leq \mathsf{P} \left\{  || \bZ ||_2 \geq  \sqrt{\eta} \bigg| || \bQ \bS||_2^2 \geq \vartheta \right\} \\
\label{eq:fa_ub_2} &\leq 1/K^2,
\end{align}
where \eqref{eq:fa_ub_2} \LL{follows from \eqref{eq:prob_cond_g_step1} and \eqref{eq:prob_e0}.}


The second term on the right hand side of~\eqref{eq:ub_multi_fa} is derived in the following steps. We first show that the real and imaginary parts of $\bS$ given by \eqref{eq:S} are subGaussian variables. Then, we show that $||\bQ \bS||^2_2$ are concentrated around $(C_2 -c) \sum_{k} |A_{k,b}|^2$ with high probability using the large deviation results on subGaussian variables. In the following, we denote $X_R$ and $X_I$ as the real and imaginary component of $X$, respectively.

\begin{lemma}
Let $\bS_R = (S_{0,R}, \cdots, S_{C_2-1,R})$ and $\bS_I = (S_{0,I}, \cdots, S_{C_2-1,I})$ be the real and imaginary components of $\bS$, respectively. Then $S_{c,R}$ are i.i.d. $\sqrt{\sum\limits_{k} A^2_{k,R}}$-subGaussian random variables with $\mathsf{E} S_{c,R} = 0$ and the subGaussian norm satisfies $|| S_{c,R} ||_{\phi_2} \leq 2 \sqrt{ \sum\limits_{k} A^2_{k,R}}$. Similarly,
$S_{c,I}$ are i.i.d. $\sqrt{\sum\limits_{k} A^2_{k,I}}$-subGaussian random variables with $\mathsf{E} S_{c,I} = 0$ and the subGaussian norm satisfies $|| S_{c,I} ||_{\phi_2} \leq 2 \sqrt{ \sum\limits_{k} A^2_{k,I}}$.
\end{lemma}
\begin{proof}
Since $\dot{g}^c_{k}$ is Rademacher variable, it is 1-subGaussian with mean zero. Thus, $\mathsf{E} S_{c,R} = 0$. Moreover, $\dot{g}^c_{k}$ are independent across $k$. According to Lemma~\ref{lemma:subgaussian_scale} and Lemma~\ref{lemma:subgaussian_sum}, $S_{c,R} = \sum_{k} A_{k,R} \dot{g}_{k}^c$ is $\sqrt{\sum\limits_{k} A^2_{k,R}}$-subGaussian. $\{S_{c,R} \}_{c=0}^{C_2 -1}$ are independent, because $\dot{g}_{k}^c$ are independent across $c = 0, \cdots, C_2-1$.

By Definition~\ref{def:subgauss}, $\mathsf{E} \left\{ \exp( t S_{c,R} ) \right\} \leq \exp \left(\sum\limits_{k} A^2_{k,R} t^2 /2 \right)$. Let $X =  S_{c,R} / \sqrt{2 \sum\limits_{k} A^2_{k,R}}$. Then $\mathsf{E} \left\{ \exp( t X ) \right\} \leq \exp(t^2/4)$. According to Theorem~\ref{theorem:subgaussian_property}, for all $p \geq 1$, $(\mathsf{E} |X|^p)^{1/p} \leq \sqrt{2p}$, which yields
\begin{align}
\left( \mathsf{E} |S_{c,R}|^p \right)^{1/p} \leq 2 \sqrt{ \sum\limits_{k} A^2_{k,R} p}.
\end{align}
By \eqref{eq:subgauss_norm}, the subGaussian norm of $S_{c,R}$ is upper bounded as $|| S_{c,R} ||_{\phi_2} \leq 2 \sqrt{ \sum\limits_{k} A^2_{k,R}}$. The statement for the imaginary parts follow similarly. \qed
\end{proof}

In order to apply Theorem~\ref{theorem:hanson_wright} to provide a concentration result on $||\bQ \bS||$, we need to first derive the Frobenius norm and the operator norm of $\bQ$. The Frobenius norm of $\bQ$ is calculated as
\begin{align}
||\bQ||^2_F &= tr\{\bQ \bQ^{\dagger} \} \\
\label{eq:tr_q} &= tr\{ \bQ \} \\
\label{eq:tr_q_1}& = C_2-c,
\end{align}
where \eqref{eq:tr_q} follows by $\bQ \bQ^{\dagger} = \bQ$, and \eqref{eq:tr_q_1} follows because the sum of the eigenvalues of $\bQ$ is $C_2-c$.

Since the largest eigenvalue of $\bQ$ is 1, the operator norm of $\bQ$ is calculated as
\begin{align}
||\bQ || = \max_{\bx \neq 0} \frac{|| \bQ \bx||_2}{||\bx||_2} = 1.
\end{align}

Conditioned on $\bg_{k_0}$,
\begin{align}
\mathsf{E} \left\{ ||\bQ \bS_R||^2_2  \right\} &= \mathsf{E} \left\{ \bS_R^{\dagger} \bQ \bQ^{\dagger} \bS_R \right\} \\
& = \mathsf{E} \left\{ \bS_R^{\dagger} \bQ  \bS_R  \right\} \\
\label{eq:mean_qz} & = \mathsf{E} \{S_{c,R}^2\} tr\{\bQ\} \\
& = (C_2-c) \sum_{k} A^2_{k,R},
\end{align}
where \eqref{eq:mean_qz} follows because $\{S_{c,R}\}_{c=0}^{C_2-1}$ are i.i.d. distributed. Similarly, $\mathsf{E} \left\{ ||\bQ \bS_I||^2_2  \right\} = (C_2-c) \sum_{k} A^2_{k,I}$.

Applying Theorem~\ref{theorem:hanson_wright} with $\bZ = \bS_R$, $\bA = \bQ$ with $||\bQ|| = 1$, $||\bQ||^2_F = C_2 -c$ and $K = 2 \sqrt{ \sum\limits_{k} A^2_{k,R}}$ yields
\begin{align}
\nonumber & \mathsf{P} \left\{ \bigg| || \bQ \bS_R ||^2_2 - (C_2-c) \sum_{k} A^2_{k,R} \bigg| >t  \right\} \leq \\
& 2 \exp \left(- \LL{\varsigma} \min \left( \frac{t^2}{  16 (C_2-c) (\sum_{k} A^2_{k,R})^2 }, \frac{t}{4 \sum_{k} A^2_{k,R}} \right) \right),
\end{align}
\LL{where $\varsigma$ is a constant introduced in Theorem~\ref{theorem:hanson_wright} in Appendix~\ref{append:conc_ineq}.}

Letting $t = (C_2-c) \sum_{k} A_{k,R}^2 /2$, we have
\begin{align}\label{eq:cond_lb_qz}
\mathsf{P} \left\{ || \bQ \bS_R||_2^2 \leq \frac{(C_2-c) \sum_{k} A^2_{k,R} }{2}  \right\} \leq 2 \exp \left(- \frac{\LL{\varsigma}}{64} (C_2-c) \right).
\end{align}
Similarly, we have
\begin{align}
\label{eq:fa_ub_4}\mathsf{P} \left\{ || \bQ \bS_I||_2^2 \leq \frac{(C_2-c) \sum_{k} A^2_{k,I} }{2} \right\} \leq 2 \exp \left(- \frac{\LL{\varsigma}}{64} (C_2-c) \right).
\end{align}

Since $\bQ$ is a real-valued matrix, $||\bQ \bS||_2^2 = ||\bQ \bS_R||_2^2 + ||\bQ \bS_I||_2^2$. Moreover, $\sum_k |A_{k,b}|^2 = \sum_k \left( A_{k,R}^2 + A_{k,I}^2 \right)$. Combining \eqref{eq:cond_lb_qz} and \eqref{eq:fa_ub_4}, we have
\begin{align}
\mathsf{P} \left\{ || \bQ \bS||_2^2 \leq \frac{(C_2-c) \sum_{k} |A_{k,b}|^2 }{2} \right\}  &\leq \mathsf{P} \left\{ || \bQ \bS_R||_2^2 \leq \frac{(C_2-c) \sum_{k} A^2_{k,R} }{2} \right\}  + \mathsf{P} \left\{ || \bQ \bS_I||_2^2 \leq \frac{(C_2-c) \sum_{k} A^2_{k,I} }{2} \right\}  \\
\label{eq:fa_ub_5} & \leq 4 \exp \left(- \frac{\LL{\varsigma}}{64} (C_2-c) \right).
\end{align}
Combining \eqref{eq:ub_multi_fa}, \eqref{eq:fa_ub_2} and \eqref{eq:fa_ub_5}, there exists some large enough $\beta_1$ such that \Lina{for $C_2 =   \lceil \beta_1\log K \rceil$},
\begin{align}
\mathsf{P}  \left\{ || \bQ \bS + \bZ||_2^2 \leq \eta \right\}  \leq \frac{2}{K^2}.
\end{align} \qed
\end{proof}

The probability that a singleton is declared to be a zeroton is calculated as 
\begin{align}
\mathsf{P} \left\{ E_{1,0} \right\}  = \mathsf{P} \left\{ || A_{k,b} \dot{\bg}_k + \dot{\bZ}_b ||_2^2 \leq \eta \right\}. 
\end{align}
Therefore, we can apply Lemma~\ref{lemma:concentrate} with $\bS = A_{k,b} \dot{\bg}_k$, $\bQ = \bI$, $\bZ = \dot{\bZ}_b $, $k_0 = 1$, and $c=0$ to obtain \eqref{eq:prob_e10}. In
this case, \Lina{$\beta_1$ can be chosen to satisfy $\beta_1\geq 192/\varsigma$} such that $4\exp\left(-\frac{\LL{\varsigma}}{64}\lceil\beta_1\log N\rceil\right)\leq K^{-2}$.

\section{Proof of Singleton Error $E_{1,1}$ \eqref{eq:prob_e11}
\label{append:proof_singleton1}}

We first show that the phase compensation is accurate with high probability. Second, we show that the interference only causes a slight degradation of signal strength with high probability. Third, the device index recovery can be regarded as transmission over a BSC channel and a large enough $C_1$ can help recover the index information.

We estimate the phase of $\theta= \angle A_{k,b}$ according to \eqref{eq:phase}. Then the estimated phase is calculated as
\begin{align}
\hat{\theta} & = \angle \left( A_{k,b} + \bar{Z} \right),
\end{align}
where $\bar{Z} =   \sum\limits_{c=0}^{C_0 - 1} \left( \bar{W}^c_{ b } + \bar{V}^c_{ b } \right) / C_0$ is Gaussian distributed with zero mean and variance $2 \sigma_{\bar{z}}^2$ conditioned on the design parameters, where $\sigma_{\bar{z}}^2 \leq (1+\LL{\beta_2}/2)\frac{\sigma^2}{BC_0}$. From the geometric interpretation, the maximum phase offsets occurs when the noise is orthogonal to the measurement. We choose a small $\theta_0$ such that $\theta_0 < \frac{\pi}{3}$ and $\sin \theta_0 > \theta_0 /2$, then
\begin{align}
\mathsf{P} \left\{ |\hat{\theta} - \theta| > \theta_0  \right\} &\leq \mathsf{E} \left\{ \mathsf{P} \left\{  \arcsin \frac{|\bar{Z}|}{|A_{k,b}|} > \theta_0 \right\}  | \bg \right\}\\
& \leq \mathsf{E} \left\{  \mathsf{P} \left\{  |\bar{Z}| > a \sin \theta_0  \right\}  | \bg \right\} \\
& \leq \mathsf{E} \left\{  \mathsf{P} \left\{  |\bar{Z}| > \frac{a \theta_0}{2}  \right\}  | \bg \right\} \\
& \leq \mathsf{E} \left\{  2 \mathsf{P} \left\{  |\text{Re} \{\bar{Z}\}| > \frac{a \theta_0}{ 2 \sqrt{2}}  \right\}  | \bg \right\} \\
\label{eq:phase_est_step1}& \leq 4 \exp \left(- \frac{a^2 \theta_0^2}{16 \sigma_{\bar{z}}^2 } \right) ,
\end{align}
where \eqref{eq:phase_est_step1} is due to $\mathsf{P} \{ X> x\} \leq e^{- x^2/2}$ with $X \sim \mathcal{N}(0,1) $ being the standard Gaussian random variable. For $\vert\hat{\theta}-\theta\vert \leq \theta_0$,
\begin{align}
\text{Re}\left\{A_{k,b}e^{-\iota\hat{\theta}}\right\}\geq& a \cos\left(\hat{\theta}-\theta\right)\\
\geq& a \cos\theta_0\\
\label{eq:phase_est}\geq& a /2.
\end{align}

Therefore, given $B = \beta_0 K$ and $C_0 = \lceil \log N \rceil$, we have
\begin{equation}\label{eq:Preal}
\mathsf{P}\left\{\text{Re}\left\{A_{k,b}e^{-\iota\hat{\theta}}\right\}\geq a/2\right\}\geq 1-4\exp\left(-\frac{a^2\theta_0^2\beta_0 K\log N}{(16+8\LL{\beta_2})\sigma^2}\right).
\end{equation}

To predict $\tilde{g}_k^c$, we make (hard) binary decision on
\begin{align} \text{Re}\left\{\tilde{Y}_b^c e^{-\iota\hat{\theta}}\right\}&= \text{Re}\left\{(A_{k,b}\tilde{g}_k^c+\tilde{V}_b^c+\tilde{W}_b^c)e^{-\iota\hat{\theta}}\right\}\\
&= \text{Re}\left\{A_{k,b}\tilde{g}_k^c e^{-\iota\hat{\theta}}\right\}+\text{Re}\left\{\tilde{Z}_b^c e^{-\iota\hat{\theta}} \right\},
\end{align}
where $\tilde{Z}_b^c=\tilde{V}_b^c+\tilde{W}_b^c$ is distributed according to $\mathcal{CN}(0,2\sigma_{\tilde{z}}^2)$ with $\sigma_{\tilde{z}}^2\leq (1+\LL{\beta_2}/2)\sigma^2/B$, conditioned on the design parameters. We consider the corruption of signal strength from interference plus noise. Thus, the prediction error for $\tilde{g}_k^c$ occurs when $\tilde{g}_k^c\text{Re}\{\tilde{Z}_b^ce^{-\iota\hat{\theta}}\}<-\text{Re}\{A_{k,b}e^{-\iota\hat{\theta}}\}$. Then the flip rate can be upper bounded by
\LL{
\begin{equation}
\mathsf{P}\left\{\text{Re}\left\{A_{k,b}e^{-\iota\hat{\theta}}\right\}<a/2\right\}+\mathsf{P}\left\{\text{Re}\left\{A_{k,b}e^{-\iota\hat{\theta}}\right\}\geq a/2,\tilde{g}_k^c\text{Re}\{\tilde{Z}_b^c e^{-\iota\hat{\theta}} \}<-\text{Re}\{A_{k,b}e^{-\iota\hat{\theta}}\}\right\},\label{eq:flip}
\end{equation}
where the first term
\begin{equation}
\mathsf{P}\left\{\text{Re}\left\{A_{k,b}e^{-\iota\hat{\theta}}\right\}<a/2\right\}\leq 4\exp\left(-\frac{a^2\theta_0^2\beta_0 K\log N}{(16+8\LL{\beta_2})\sigma^2}\right)\label{eq:flip_1}
\end{equation}
according to \eqref{eq:Preal}, and the second term 
\begin{align}
\mathsf{P}&\left\{\text{Re}\left\{A_{k,b}e^{-\iota\hat{\theta}}\right\}\geq a/2,\tilde{g}_k^c\text{Re}\{\tilde{Z}_b^c e^{-\iota\hat{\theta}} \}<-\text{Re}\{A_{k,b}e^{-\iota\hat{\theta}}\}\right\}\notag\\
&\leq\mathsf{P}\left\{\tilde{g}_k^c\text{Re}\{\tilde{Z}_b^c e^{-\iota\hat{\theta}} \}<-a/2\right\}\label{eq:A_real}\\
&\leq\mathsf{P}\left\{ -|\tilde{Z}_b^c| <-a/2\right\}\label{eq:z_gaussian_0}\\
&=\exp\left(-\frac{a^2}{8\sigma_{\tilde{z}}^2}\right)\label{eq:rayleigh}\\
&\leq\exp\left(-\frac{a^2\beta_0K}{(8+4\beta_2)\sigma^2}\right),\label{eq:flip_2}
\end{align}
where \eqref{eq:z_gaussian_0} is due to $\tilde{g}_k^c\text{Re}\{\tilde{Z}_b^c e^{-\iota\hat{\theta}} \} \geq - |\tilde{Z}_b^c|$, and \eqref{eq:rayleigh} is because $\vert\tilde{Z}_b^c\vert$ follows a Rayleigh distribution since $\tilde{Z}_b^c\sim\mathcal{CN}(0,2\sigma_{\tilde{z}}^2)$. 
Combining \eqref{eq:flip}, \eqref{eq:flip_1} and \eqref{eq:flip_2}, the flip rate can be upper bounded by $4\exp\left(-\frac{a^2\theta_0^2\beta_0 K\log N}{(16+8\LL{\beta_2})\sigma^2}\right)+\exp\left(-\frac{a^2\beta_0K}{(8+4\beta_2)\sigma^2}\right)$.} This indicates that the flip rate tends to $0$ when $K,N\rightarrow\infty$. 
\LL{By \cite[Theorem 6.1]{barg2004error}, there exists a code rate that depends only on the given flipping rate of the binary symmetric channel, to encode the  $(\lceil \log N \rceil + \lceil \log S \rceil)$-bit information, such that the index can be recovered correctly with probability at least $1 - 1/ N^2$. For large enough $K$, the flipping rate $e^{-O(K)}$ vanishes. Thus, incorrect device identification and message decoding occurs with probability 
\begin{align}
\mathsf{P} \left\{ E_{1,1} \right\} \leq N^{-2} \leq K^{-2}. 
\end{align}
with a certain fixed rate $R$.}

\section{Proof of Singleton Error $E_{1,2}$ \eqref{eq:prob_e12}
\label{append:proof_singleton2}}

Let $\dot{A}_{k,b} = \dot{\bg}_k^{\dagger} \dot{\bY}_b /C_2$. We have
\begin{align}
 \dot{\bY}_{b} - \dot{A}_{k,b} \dot{\bg}_{k} = \left( \bI - \frac{1}{C_2} \dot{\bg}_{k} \dot{\bg}_{k}^{\dagger} \right) \dot{\bZ}_b .
\end{align}
Let $\bQ =  \bI - \frac{1}{C_2} \dot{\bg}_{k} \dot{\bg}_{k}^{\dagger}$. Since $\dot{\bg}_{k}^{\dagger} \dot{\bg}_{k} = C_2$, $\dot{\bg}_{k}$ is the eigenvector of $ \dot{\bg}_{k} \dot{\bg}_{k}^{\dagger} / C_2$. Since $ \dot{\bg}_{k} \dot{\bg}_{k}^{\dagger} / C_2$ is a rank-1 matrix, it has only a single nonzero eigenvalue which is 1. Therefore, the eigenvalue decomposition of $\bQ$ can be written as
\begin{align}
 \bQ &= \bU \Lambda \bU^{\dagger}  \\
\label{eq:eigen_q}& =  \bU  {\rm diag} \{0, 1, \cdots, 1\} \bU^{\dagger}
\end{align}
where $\bU$ is an orthogonal matrix. Then we have
\begin{align}
||  \dot{\bY}_{b} - \dot{A}_{k,b} \dot{\bg}_{k} ||^2_2 &= || \bQ \dot{\bZ}_b ||_2^2 \\
&= || \Lambda \bU^{\dagger} \dot{\bZ}_b ||_2^2 \\
\label{eq:err_zeroton_1} &= \sum\limits_{c=1}^{C_2-1} |Z'_c|^2,
\end{align}
where $\bZ' = \bU^{\dagger} \dot{\bZ}_b $ and \eqref{eq:err_zeroton_1} is due to $\Lambda = {\rm diag} (0, 1, \cdots, 1)$. Since $||\dot{\bZ}_b ||_2^2 = || \bZ'||_2^2$, we have
\begin{align}
\mathsf{P} \left\{ ||  \dot{\bY}_{b} - \dot{A}_{k,b} \dot{\bg}_{k} ||^2_2 \geq \eta  \right\} &= \mathsf{P} \left\{  \sum\limits_{c=1}^{C_2-1} |Z'_c|^2 \geq \eta  \right\} \\
& \leq \mathsf{P} \left\{  ||\dot{\bZ}_b ||_2^2 \geq \eta  \right\} \label{eq:z_eta}\\
\label{eq:ub_Es3} & \leq K^{-2},
\end{align}
where \eqref{eq:ub_Es3} follows from \eqref{eq:prob_cond_g_step1} and \eqref{eq:prob_e0}. 


\section{Proof of Multiton Error \eqref{eq:prob_e2}}
\label{append:proof_multiton}

We rewrite the subcarrier values as follows,
\begin{align}
\dot{\bY}_b  & = \sum\limits_{ k \in \mathcal{K}: b \in \mathcal{B}_k   } A_{k,b} \dot{\bg}_{k} + \dot{\bZ}_{b},
\end{align}
where $\dot{\bZ}_b = \dot{\bW}_b + \dot{\bV}_b$.

Suppose the incorrect estimate index from subcarrier-$b$ is $k_b$. We have
\begin{align}
\dot{A}_{k_b} = \sum\limits_{ k \in \mathcal{K}: b \in  \mathcal{B}_k    } \frac{1}{C_2} A_{k,b} \dot{\bg}_{k_b}^{\dagger} \dot{\bg}_{k} + \frac{1}{C_2} \dot{\bg}_{k_b}^{\dagger} \dot{\bZ}_{b}.
\end{align}
Thus,
\begin{align}
 \label{eq:fa_test}\dot{\bY}_b - \dot{A}_{k_b} \dot{\bg}_{k_b} &= \sum\limits_{ k \in \mathcal{K}:  b \in  \mathcal{B}_k   } A_{k,b} \left( \bI - \frac{ \dot{\bg}_{k_b} \dot{\bg}_{k_b}^{\dagger}}{C_2} \right) \dot{\bg}_{k} +  \left( \bI - \frac{ \dot{\bg}_{k_b} \dot{\bg}_{k_b}^{\dagger}  }{C_2}  \right) \dot{\bZ}_{b}.
\end{align}

Let
\begin{align}\label{eq:sum_rademacher}
\bS =  \sum\limits_{ k \in \mathcal{K}: b \in  \mathcal{B}_k  } A_{k,b} \dot{\bg}_{k}
\end{align} and $\bQ =  \bI - \frac{1}{C_2} \dot{\bg}_{k_b} \dot{\bg}_{k_b}^{\dagger}$, then the first term in \eqref{eq:fa_test} can be written as
\begin{align}
\sum\limits_{ k \in \mathcal{K}:  b \in  \mathcal{B}_k    } A_{k,b} \left( \bI - \frac{ \dot{\bg}_{k_b} \dot{\bg}_{k_b}^{\dagger}}{C_2} \right) \dot{\bg}_{k} = \bQ \bS.
\end{align}

The multiton subcarrier cannot be detected when $|| \bY_b - \dot{A}_{k_b, b} \dot{\bg}_{k_b} ||_2^2 \leq \eta$. In the following, we upper bound the error probability assuming $k_b \notin \{ k \in \mathcal{K}: b \in \mathcal{B}_k \}$. A similar analysis can be carried out for the case of $k_b \in \{ k \in \mathcal{K}: b \in \mathcal{B}_k \}$. Conditioned on $\dot{\bg}_{k_b}$, we can apply Lemma~\ref{lemma:concentrate} with $c=1$, $k_0 = |k \in \mathcal{K}:  b \in  \mathcal{B}_k |$ and $\bZ =  \bQ \dot{\bZ}_b$ to upper bound the error probability as
\begin{align}\label{eq:multiton_tmp}
\mathsf{P} \left\{ || \bY_b - \dot{A}_{k_b,b} \dot{\bg}_{k_b} ||_2^2 \leq  \eta | \dot{\bg}_{k_b} \right\} &= 
\mathsf{P} \left\{ || \bQ \bS + \bQ \dot{\bZ}_b ||_2^2 \leq  \eta | \dot{\bg}_{k_b} \right\} \\
& \leq 2 K^{-2},\label{eq:lemma6_result}
\end{align}
\LL{where \eqref{eq:lemma6_result} directly follows from \eqref{eq:lemma6}.} Since \eqref{eq:multiton_tmp} holds for every      $\dot{\bg}_{k_b}$, we have 
\begin{align}
\mathsf{P} \left\{ || \bY_b - \dot{A}_{k_b,b} \dot{\bg}_{k_b} ||_2^2 \leq  \eta \right\} \leq 2 K^{-2}.
\end{align}


\section{Proof of Lemma~\ref{lemma:delay_est}}
\label{append:proof_delay_est}

We focus on the delay estimation for device $k$. Without loss of generality, we assume the delay is $m_k = 0$. The device experiences the interference from the other $K-1$ devices and noise. The received synchronization pilots can be written as 
\begin{align}
x_i = a_k s_{k, i} +\sum_{p \in \mathcal{K} \backslash k} a_{p} s_{p, i-m_p} + w_i, 
\end{align}
where the time-domain samples of the pilots $s_{k,i}$ are uniform random from $\{+1, -1\}$ and the noise $w_i \sim \mathcal{CN}(0, 2\sigma^2)$.

Let $\mathcal{I}$ denote the samples in the \LL{last} subframe with the first $M$ samples discarded. The number of samples contained in $\mathcal{I}$ is $|\mathcal{I}| = \Lina{\lceil\beta_3 K \log (KM+1)\rceil} $, $\LL{\beta_3} \geq 64 \bar{a}^2 / a^2$. Since the random sequence has a length greater than $M$ and the delay is less than $M$, without noise, correlating the second segment with the received sequence will yield an auto-correlation of a sequence with its cyclic shift version. According to \eqref{eq:metric}, the test metric is calculated as 
\begin{align}\label{eq:tm_metric}
\mathcal{T} (m) = 
\left\{
\begin{array}{cc}
 |\mathcal{I}| a_k + \sum_{i \in \mathcal{I} }  w_{i+M} s_{k,i+M} + \sum_{p \in  \mathcal{K} \backslash k} \sum_{i \in \mathcal{I} }  a_p s_{p, i+M-m_p} s_{k,i+M} & \text{ if }m = 0 \\
 \sum_{i \in \mathcal{I}} a_k s_{k,i+M+m} s_{k,i+M} +  \sum_{p \in  \mathcal{K} \backslash k} \sum_{i \in \mathcal{I}} a_p s_{p, i+M+m-m_p} s_{k,i+M}+ \sum_{i \in \mathcal{I}}w_{i+M+m} s_{k,i+M}  & \text{ if } m \neq 0. \\
\end{array} \right.
\end{align}
By the choice of the random sequence, when $p \neq k$, $s_{p, i+M+m-m_p} s^{\ast}_{k,i+M}$ are i.i.d. Rademacher variables, for all $p$ and $i \in \mathcal{I}$. Moreover, when $m \neq 0$, $s_{k,i+M+m} s_{k,i+M}$ are i.i.d. Rademacher variables for all $i \in \mathcal{I}$. We will show that 
with a large enough of $\LL{\beta_3}$, the error probability can be lower than $O(1/K^2)$.

Define a threshold $\mathcal{\bar{T}} = a  |\mathcal{I}| /2$. If $\text{Re} \{\mathcal{T} (0) \} > \mathcal{\bar{T}}$ and $\text{Re} \{\mathcal{T} (m) \} < \mathcal{\bar{T}}$ for all $m = 1, \cdots, M-1$, the delay can be correctly estimated. Therefore, the error probability can be upper bounded as  
\begin{align}
\label{eq:prob_delay_0} P_e &\leq \sum_{m=1}^{M-1} \mathsf{P} \{ \text{Re} \{ \mathcal{T} (m) \} \geq\mathcal{\bar{T}} \} + \mathsf{P} \{ \text{Re} \{ \mathcal{T} (0) \} \leq \mathcal{\bar{T}} \} .
\end{align}

Let $r_{k,i}$ be random i.i.d. Radamacher variables for all $k \in \mathcal{K}$ and $i \in \mathcal{I}$. Let $z_{i} = \text{Re} \{w_i\}$ be the real part of the random noise $w_i$. For $m=1,\cdots,M-1$, we have
\begin{align}
\label{eq:bound_tm}\mathsf{P} \left\{ \text{Re} \{ \mathcal{T} (m) \} \geq \mathcal{\bar{T}} \right\} &\leq \mathsf{P} \left\{  \sum_{p \in  \mathcal{K}} \sum_{i \in \mathcal{I}} a_p r_{p,i} \geq \mathcal{\bar{T}} /2 \right\} + \mathsf{P} \left\{ \sum_{i \in \mathcal{I}} z_{i+M+m} s_{k,i+M}   \geq \mathcal{\bar{T}} /2 \right\} .
\end{align}

We first derive the upper bound for the second term in \eqref{eq:bound_tm}. Conditioned on $s_{k,i+M}$, $k \in \mathcal{K}$ and $i \in \mathcal{I}$, the variables $z_{i+M+m} s_{k,i+M}  $ are i.i.d. Gaussian variables with mean zero and variance $\sigma^2$. Therefore, 
\begin{align}
\mathsf{P} \left\{ \sum_{i \in \mathcal{I}} z_{i+M+m} s_{k,i+M}  \geq \mathcal{\bar{T}} /2 \right\}  &= Q \left( \frac{\bar{T}}{2 \sqrt{|\mathcal{I}| \sigma^2} } \right) \\
& \leq e^{- \frac{ \bar{T}^2}{8 |\mathcal{I}| \sigma^2} } \\
\label{eq:ub_noise}& = e^{-\frac{a^2 |\mathcal{I}| }{32 \sigma^2}}.
\end{align}

We then derive the upper bound for the first term in \eqref{eq:bound_tm}.

Rademacher variables are $1$-subGaussian with mean zero. According to Lemma~\ref{lemma:subgaussian_scale} and Lemma~\ref{lemma:subgaussian_sum}, $\sum_p \sum_{i \in \mathcal{I}} a_p r_{p,i}$ is $\sqrt{\sum_p |\mathcal{I}| a_p^2}$-subGaussian variables with zero mean. Thus we have
\begin{align}
\mathsf{P} \left\{  \sum_{p \in \mathcal{K}} \sum_{i \in \mathcal{I}} a_p r_{p,i} \geq \mathcal{\bar{T}} / 2 \right\} 
& \leq  e^{-  \frac{\mathcal{\bar{T}}^2 }{8 |\mathcal{I}| \sum_p a^2_p}   } \\
\label{eq:ub_cross_terms}& \leq  e^{-   \frac{a^2 |\mathcal{I}| }{32  K \bar{a}^2 } }. 
\end{align}
Choosing $\LL{\beta_3} \geq 64 \bar{a}^2 / a^2$, \LL{\eqref{eq:ub_noise} and \eqref{eq:ub_cross_terms} are both upper bounded by $1/K^2$. Thus, by \eqref{eq:bound_tm}, we have }
\begin{align}
\sum_{m=1}^{M-1} \mathsf{P} \left\{ \text{Re} \{ \mathcal{T} (m) \} \geq \mathcal{\bar{T}} \right\} \leq \frac{2}{K^2}.
\end{align}

Following exactly the same derivations, we can obtain $ \mathsf{P} \{ \text{Re} \{ \mathcal{T} (0) \}  \leq 2 /K^2$. The detailed derivations is shown below. Given \eqref{eq:tm_metric} on $\mathcal{T}(m)$, we have
\begin{align}
\mathsf{P}\{\text{Re}\{\mathcal{T}(0)\} \leq \bar{\mathcal{T}}\}\leq&\mathsf{P}\left\{\frac{|\mathcal{I}|a_k}{2}+\sum_{p\in\mathcal{K}\backslash k}\sum_{i\in\mathcal{I}}a_pr_{p,i}\leq\bar{\mathcal{T}}/2\right\}+\mathsf{P}\left\{\frac{|\mathcal{I}|a_k}{2}+\sum_{i\in\mathcal{I}}z_{i+M}s_{k,i+M}\leq\bar{\mathcal{T}}/2\right\}\\
=&\mathsf{P}\left\{\sum_{p\in\mathcal{K}\backslash k}\sum_{i\in\mathcal{I}}a_pr_{p,i}\leq\bar{\mathcal{T}}/2-\frac{|\mathcal{I}|a_k}{2}\right\}+\mathsf{P}\left\{\sum_{i\in\mathcal{I}}z_{i+M}s_{k,i+M}\leq\bar{\mathcal{T}}/2-\frac{|\mathcal{I}|a_k}{2}\right\}\\
\label{eq:bound_tm_m0} \leq&\mathsf{P}\left\{\sum_{p\in\mathcal{K}\backslash k}\sum_{i\in\mathcal{I}}a_pr_{p,i}\leq-\bar{\mathcal{T}}/2 \right\}+\mathsf{P}\left\{\sum_{i\in\mathcal{I}}z_{i+M}s_{k,i+M}\leq-\bar{\mathcal{T}}/2\right\}.
\end{align}   
The first term can be upper bounded according to Lemma~\ref{lemma:rademacher_basic} as
\begin{align}
\label{eq:ub_cross_terms_2} \mathsf{P}\left\{\sum_{p\in\mathcal{K}\backslash k}\sum_{i\in\mathcal{I}}a_pr_{p,i}\leq-\bar{\mathcal{T}}/2 \right\}\leq e^{-\frac{a^2\vert\mathcal{I}\vert}{32K\bar{a}^2}}.
\end{align}
For the second term, conditioned on $s_{k,i+M},k\in\mathcal{K}$ and $i\in\mathcal{I}$, the variables $z_{i+M}s_{k,i+M}$ are i.i.d. Gaussian variables with zero mean and variance $\sigma^2$. Moreover, $z_{i+M}s_{k,i+M}$ are independent across $i$. Therefore, by symmetry,
\begin{align}
\mathsf{P}\left\{\sum_{i\in\mathcal{I}}z_{i+M}s_{k,i+M}\leq-\bar{\mathcal{T}}/2\right\}=&\mathsf{P}\left\{\sum_{i\in\mathcal{I}}z_{i+M}s_{k,i+M}\geq\bar{\mathcal{T}}/2\right\}\\
=&Q\left(\frac{\bar{\mathcal{T}}}{2\sqrt{\vert\mathcal{I}\vert\sigma^2}}\right)\\
\label{eq:ub_noise_2} \leq&e^{{\color{blue}-}\frac{a^2 |\mathcal{I}| }{32 \sigma^2}}.
\end{align}         

Choosing $\LL{\beta_3} \geq 64 \bar{a}^2 / a^2$, \LL{\eqref{eq:ub_cross_terms_2} and \eqref{eq:ub_noise_2} are both upper bounded by $1/K^2$. Hence, by \eqref{eq:bound_tm_m0},} we have 
\begin{align}
\mathsf{P}\{\text{Re}\{\mathcal{T}(0)\} \leq \frac{2}{K^2}.
\end{align}

Hence the proof of Lemma~\ref{lemma:delay_est}.

\bibliographystyle{IEEEtran}
\bibliography{IEEEabrv,all_it_bib,xu_bib}

\end{document}